\newcommand{\ie}{i.e.}
\newcommand{\eg}{e.g.}
\newcommand{\etal}{\emph{et al.}}
\newcommand{\mc}[1]{\mathcal{#1}}
\newcommand{\Bs}[1]{\boldsymbol{#1}}
\newcommand{\rspan}[1]{\left\langle #1 \right\rangle}
\newcommand{\uni}[1]{\mathsf{Uni}\left( #1 \right)}
\newcommand{\ter}{\mathsf{T}}
\newcommand{\alice}{\mathsf{A}}
\newcommand{\bob}{\mathsf{B}}
\newcommand{\calvin}{\mathsf{C}}
\newcommand{\eve}{\mathsf{E}}
\newcommand{\Fbb}{\mathbb{F}}
\newcommand{\Kca}{\mathcal{K}}
\newtheorem{theorem}{Theorem}
\newtheorem{corollary}{Corollary}
\newtheorem{lemma}{Lemma}
\newtheorem{definition}{Definition}
\newtheorem{proposition}{Proposition}
\newtheorem{example}{Example}
\newenvironment{remark}{\noindent\textbf{Remark:}}{}
\title{Multi-terminal Secrecy in a \\
Linear Non-coherent Packetized  Networks}
\author{
\IEEEauthorblockN{Mahdi~Jafari~Siavoshani}
\IEEEauthorblockA{Ecole Polytechnique F\'{e}d\'{e}rale de Lausanne\\
Email: mahdi.jafarisiavoshani@epfl.ch}
\and
\IEEEauthorblockN{Christina~Fragouli}
\IEEEauthorblockA{Ecole Polytechnique F\'{e}d\'{e}rale de Lausanne\\
Email: christina.fragouli@epfl.ch}
}
\begin{document}
\maketitle

\begin{abstract}
We consider a group of $m+1$ trusted nodes that aim
to create a shared secret key $\mc{K}$ over a network in the
presence of a passive eavesdropper, Eve. We assume a linear non-coherent 
network coding broadcast channel (over a finite field $\Fbb_q$) 
from one of the honest nodes (\ie, Alice) to the rest of them including
Eve. All of the trusted nodes can also discuss over a cost-free
public channel which is also overheard by Eve.

For this setup, we propose upper and lower bounds for the secret key 
generation capacity assuming that the field size $q$ is very large. 
For the case of two trusted terminals ($m=1$) our upper and lower
bounds match and we have complete characterization for the secrecy 
capacity in the large field size regime.
\end{abstract}

\section{Introduction}

For communication over a network performing linear network coding, 
Cai and Yeung \cite{CaiYeu-ISIT02-SecNetCode} 
introduced the problem of securing a multicast transmission 
against an eavesdropper. In particular, consider a network implementing 
linear network coding over a finite field $\Fbb_q$. Let us assume that the min-cut
value from the source to each receiver is $c$. From the main theorem of 
network coding \cite{AhlCaiLiYeu-IT00-NetCode,LiYeCa-IT03-NetCodeLin}
we know that a source 
can send information at rate equal to the min-cut $c$ to the destinations, in the 
absence of any malicious eavesdropper. Now, suppose there is a passive eavesdropper, Eve, 
who overhears $\rho$ arbitrary edges in the network. The \emph{secure network coding} 
problem is to design a coding scheme such that Eve does not obtain any 
information about the messages transmitted from the source to destinations.
Cai and Yeung \cite{CaiYeu-ISIT02-SecNetCode} showed that the secrecy capacity for
this problem is $c-\rho$ and can be achieved if the field size $q$ is sufficiently
large. Later this problem formulation has been investigated in many other works.
Feldman \etal~\cite{FeldMalSteSer-ALLERTON04-SecNetCode} showed that by sacrificing a small
amount of rate, one might find a secure scheme that requires much smaller field size.
Rouayheb \etal~\cite{RouSolj-ISIT07-WiretapNetII} observed that this problem 
can be considered as a generalization of the Ozarow-Wyner wiretap channel of type II.
Silva \etal~\cite{SilKsch-IT11-UnivSecNetCode} proposed a universal coding scheme 
that only employs encoding at the source. 

In contrast to the previous work, in this paper we study the problem of
secret key sharing among multiple terminals when nodes can send feedback
over a public channel. We consider a source multicasting information over 
a network at rate equal to the min-cut $c$ to the destinations. We also assume that the 
relay nodes in the network perform linear randomized network coding which is modeled by a non-coherent 
transmission scheme.
Motivated by \cite{JaMoFrDi-IT11,SiKschKoet-IT10-MatrixChnl}, we model a 
non-coherent network coding scenario by a multiplicative matrix channel over 
a finite field $\Fbb_q$ with uniform and i.i.d. distribution over transfer 
matrices in every time-slot.

The problem of key agreement between a 
set of terminals with access to noisy broadcast channel and public discussion 
channel (visible to the eavesdropper) was studied in \cite{CsNa-IT08-ChnlSecrecy}, where some 
achievable secrecy rates were established,
assuming Eve does not have access to the noisy broadcast
transmissions. This was generalized in \cite{GohAna-IT10-P1,GohAna-IT10-P2} by developing
(non-computable) outer bounds for secrecy rates.
However, to the best of our knowledge, ours is the first work to consider multi-terminal
secret key agreement over networks employing randomized network coding, 
when a passive eavesdropper has access to the broadcast transmissions.

Our contributions in this paper are as follows.
For the secret key sharing problem introduced above, we propose an 
asymptotic achievability scheme assuming that the 
field size $q$ is large. This scheme is based on \emph{subspace coding} 
and can be extended for arbitrary number of terminals. Using the result 
of \cite{CsNa-IT08-ChnlSecrecy}, we derive an upper bound for this 
problem. For $m=1$, the proposed lower bound matches the upper bound 
and the \emph{secret key generation capacity} is characterized.
However, for $m\ge 2$, depending on the channel parameters, the upper
and lower bound might match or not.

The paper is organized as follows.
In \S\ref{sec:NonCohSecrecy-NotatioSetup} we introduce our notation and
the problem formulation and present some preliminaries. 
In \S\ref{sec:IndpBrdCstChnl-Secrecy-UpperBound}, 
we state a general 
upper bound for the key generation capacity and evaluate it for the 
non-coherent network coding broadcast channel. The main results of the
paper are presented in \S\ref{sec:NonCohChnl-Secrecy-AchvScheme}.

\section{Notation and Setup}\label{sec:NonCohSecrecy-NotatioSetup}
\subsection{Notation}
We use $\rspan{X}$ to denote the row 
span of a matrix $X$. We use also $[i:j]$ to 
denote $\{i,i+1,\ldots,j\}$ where $i,j\in\mathbb{Z}$.

Let $\Pi$ be an arbitrary vector space of finite dimension defined over a finite
field $\Fbb_q$. Suppose $\Pi_1$ and $\Pi_2$ are two subspaces of $\Pi$, \ie,
$\Pi_1\sqsubseteq \Pi$ and $\Pi_2\sqsubseteq \Pi$. We use $\Pi_1\cap \Pi_2$ to denote the common
subspaces of both $\Pi_1$ and $\Pi_2$ and $\Pi_1+\Pi_2$ as the smallest subspace that contains
both $\Pi_1$ and $\Pi_2$. Two subspaces $\Pi_1$ and $\Pi_2$ are called \emph{orthogonal}
if $\Pi_1\cap \Pi_2 =\{\Bs{0}\}$. Two subspaces $\Pi_1$ and $\Pi_2$ of $\Pi$
are called \emph{complementary} if they are orthogonal and $\Pi_1+\Pi_2=\Pi$.

Now, consider two subspaces $\Pi_1$ and $\Pi_2$. We define the subtraction
of $\Pi_2$ from $\Pi_1$ by $U=\Pi_1\setminus_s \Pi_2$ where $U$ is any subspace
of $\Pi_1$ which is complementary with $\Pi_1\cap \Pi_2$. Note that, given
$\Pi_1$ and $\Pi_2$, $U$ is not uniquely defined.

For notational convenience, when $\mc{J}$ is a set, by $\Pi_\mc{J}$ we mean
$\Pi_\mc{J}\triangleq \cap_{i\in\mc{J}} \Pi_i$. 

\subsection{Preliminaries}\label{sec:Preliminaries}

\begin{definition}
We define $\mc{S}(\ell,k)$ to be the set of all subspaces 
of dimension at most $k$ in the $\ell$-dimensional space $\Fbb_q^\ell$.
\end{definition}

\begin{definition}[{see \cite{JaMoFrDi-IT11}}]\label{eq:Definition-xi}
We denote by $\xi(n,d) $ the number of different $n\times \ell$ matrices with
elements from a finite field $\Fbb_q$, such that their rows span a specific
subspace $\pi_d\sqsubseteq\Fbb_q^\ell$ of dimension $d$ where $0\le d\le \min[n,\ell]$.
By using \cite[Lemma~2]{JaMoFrDi-IT11}, $\xi(n,d)$ does not depend 
on $\ell$ and depends on $\pi_d$ only through its dimension $d$.
\end{definition}

\begin{lemma}\label{lem:Uni-and-Joint-k-RandSubSpace}
Suppose that $k$ subspaces $\Pi_1,\ldots,\Pi_k$, with dimensions 
$d_1,\ldots,d_k$, are chosen uniformly at random from $\Fbb_q^n$.
Then w.h.p. (with high probability)\footnote{During the paper by 
``high probability'' we mean probability of order $1-O(q^{-1})$
unless otherwise stated.}
we have
\begin{align*}
\dim\left(\Pi_1+\cdots+\Pi_k\right) = \min\left[d_1+\cdots+d_k,n \right], \quad\text{and} \\
\dim\left(\Pi_1\cap\cdots\cap\Pi_k\right) = \left[d_1+\cdots+d_k - (k-1)n \right]^+.
\end{align*}
Note that even if one of the subspaces, for example $\Pi_1$, is a fixed 
subspace, then the above results are still valid.
\end{lemma}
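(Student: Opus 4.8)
The plan is to prove the two statements by induction on $k$, handling the sum first and then the intersection, and throughout to use a counting/probabilistic argument over the uniform choice of subspaces. The base case $k=1$ is trivial. For the inductive step on the sum, I would condition on $\Pi_1+\cdots+\Pi_{k-1}$ having dimension $D_{k-1}=\min[d_1+\cdots+d_{k-1},n]$ (which holds w.h.p.\ by the inductive hypothesis) and then estimate the probability, over a uniformly random $\Pi_k$ of dimension $d_k$, that $\dim\left((\Pi_1+\cdots+\Pi_{k-1})\cap\Pi_k\right)$ exceeds $[D_{k-1}+d_k-n]^+$. The key fact I would use is the standard Gaussian-binomial estimate: for fixed subspaces $A,B$ of $\Fbb_q^n$ of dimensions $a,b$, a uniformly random subspace of dimension $b$ intersects a fixed subspace of dimension $a$ in dimension exactly $[a+b-n]^+$ with probability $1-O(q^{-1})$, since the number of ``bad'' subspaces (those with larger intersection) is smaller by a factor $\Theta(q^{-1})$. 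This is where the large-field-size regime and the $1-O(q^{-1})$ notion of ``high probability'' enter. Applying the dimension formula $\dim(A+B)=\dim A+\dim B-\dim(A\cap B)$ then gives $\dim(\Pi_1+\cdots+\Pi_k)=\min[D_{k-1}+d_k,n]=\min[d_1+\cdots+d_k,n]$ w.h.p., and a union bound over the $O(1)$ events (one per induction level) keeps the total failure probability $O(q^{-1})$.

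For the intersection statement, I would similarly induct, but it is cleaner to first establish a ``generic intersection'' lemma: for fixed subspaces $A$ of dimension $a$ and a uniformly random $B$ of dimension $b$ in $\Fbb_q^n$, w.h.p.\ $\dim(A\cap B)=[a+b-n]^+$. Then, writing $\Pi_{[1:k]}=\Pi_1\cap\cdots\cap\Pi_k=(\Pi_1\cap\cdots\cap\Pi_{k-1})\cap\Pi_k$, I would like to apply this with $A=\Pi_{[1:k-1]}$, whose dimension is $[d_1+\cdots+d_{k-1}-(k-2)n]^+$ w.h.p.\ by induction, and $B=\Pi_k$. The resulting dimension is
\begin{align*}
\left[\,[d_1+\cdots+d_{k-1}-(k-2)n]^+ + d_k - n\,\right]^+ = \left[d_1+\cdots+d_k-(k-1)n\right]^+,
\end{align*}
where the last equality is a short case check (if the inner $[\cdot]^+$ is zero, then $d_1+\cdots+d_{k-1}\le(k-2)n$, so $d_1+\cdots+d_k-(k-1)n\le d_k-n\le 0$, and both sides are $0$). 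Again a union bound over $O(1)$ failure events suffices.

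The main obstacle is the conditional-independence bookkeeping: when I apply the generic-intersection fact with $A=\Pi_{[1:k-1]}$ and $B=\Pi_k$, I must be careful that $\Pi_k$ is still uniformly distributed given the event that $\Pi_{[1:k-1]}$ has its typical dimension, and that this event depends only on $\Pi_1,\ldots,\Pi_{k-1}$ and hence is independent of $\Pi_k$. Since the $\Pi_i$ are drawn independently, conditioning on an event measurable with respect to $\Pi_1,\ldots,\Pi_{k-1}$ leaves $\Pi_k$ uniform, so this goes through, but it is the point that needs the most care. A secondary technical point is the precise constant in the $O(q^{-1})$ bounds from the Gaussian-binomial coefficients; since $k$ is constant, these constants multiply out harmlessly. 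Finally, the remark that one of the subspaces may be fixed rather than random is immediate: the induction only ever uses randomness of the ``new'' subspace $\Pi_k$ at each step against a fixed target built from the others, so designating $\Pi_1$ as fixed changes nothing in the argument (one simply orders the induction so that the fixed subspace is $\Pi_1$, the base case).
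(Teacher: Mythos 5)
Your plan is correct and is essentially the paper's own proof: the paper simply invokes a pairwise generic-position result (\cite[Corollary~1]{JaFrDi-IT12}, which is exactly your ``generic intersection'' fact for a fixed subspace against a uniformly random one) together with induction on the number of subspaces. Your additional care about conditioning on $\Pi_1,\ldots,\Pi_{k-1}$ and the $[\cdot]^+$ case check just fills in the details the paper leaves implicit.
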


\begin{proof}
These results follow from \cite[Corollary~1]{JaFrDi-IT12} by using induction
on the number of subspaces.
\end{proof}

\subsection{Problem Statement}
\label{sec:SecShar-NonCohNetCode-ProbStatement}
We consider a set of $m+1 \ge 2$ honest nodes, $\ter_0, \ldots, \ter_{m},$ ($\ter$ stands for ``terminal'') 
that aim to share a secret key $\Kca$ among themselves while keeping 
it concealed from a passive adversary, Eve.  Eve does 
not perform any transmissions, but is trying to eavesdrop on (overhear) 
the communications between the honest nodes. For convenience, sometimes 
we will refer to node $\ter_0,\ter_1,\ter_2,\ldots,$ as ``Alice,'' ``Bob,'' ``Calvin,''
and so on.

We assume that there exists a non-coherent network coding broadcast channel
(which is going to be defined more precisely in the following) from Alice to the other
terminals (including Eve). Also we assume that the legitimate terminals can publicly
discuss over a noiseless rate unlimited public channel. 


Consider a non-coherent linear network coding communication scenario 
where at every time-slot $t$ Alice (terminal $\ter_0$) injects a set of $n_\alice$ vectors 
(packets) of length $\ell$ (over some finite field $\Fbb_q$) into the network, 
denoted by the row vectors of the matrix $X_\alice[t]\in\Fbb_q^{n_A\times \ell}$.
Each terminal $\ter_i$ receives $n_i$ randomly chosen linear 
combinations of the transmitted vectors, namely for 
$r\in\{1,\ldots,m,\eve\}$, we have\footnote{As subscript, we use $i$
to denote for $\ter_i$ for all $i\in[0:m]$.
At some points, we also use
$X_\alice$, $X_\bob$, $X_\calvin$, etc., to denote for $X_0$, $X_1$, $X_2$, etc.}
\begin{equation}\label{eq:NonCohNetCodSecrecy-MatrixChnl}
X_r[t] = F_r[t] X_\alice[t],
\end{equation}
where $F_r[t]\in\Fbb_q^{n_r\times n_\alice}$ is chosen uniformly at random 
among all possible matrices and independently for each receiver
and every time-slot. So for the channel transition probability we can write
\begin{gather}
P_{X_1\cdots X_m X_\eve | X_\alice} (x_1,\ldots,x_m,x_\eve | x_\alice) =  \nonumber\\
P_{X_\eve|X_\alice}(x_\eve|x_\alice) \prod_{i=1}^m P_{X_i|X_\alice}(x_i|x_\alice), \label{eq:NonCohNetCode-IndpChnlTrnsProb}
\end{gather}
where for each $r\in\{1,\ldots,m,\eve\}$ we have (see \cite[Sec~IV-A]{JaMoFrDi-IT11})
\begin{equation*}
P_{X_r|X_\alice}(x_r|x_\alice) \triangleq
\left\{\begin{array}{ll}
q^{-n_r\dim(x_\alice)} & \text{if } \rspan{x_r}\sqsubseteq \rspan{x_\alice},\\
0 & \text{otherwise}.
\end{array} \right.
\end{equation*}
Note that in this setup we do not assume any CSI\footnote{Channel state 
information.} at the transmitter or receivers.

In order to define the secrecy capacity, we use \cite[Definition~1]{JaFrDiPuAr-Asilomar10}
and \cite[Definition~2]{JaFrDiPuAr-Asilomar10} (see also \cite{Ma-IT93,AhlCs-IT93-P1,CsNa-IT08-ChnlSecrecy,GohAna-IT10-P2}).

\section{Upper Bound}\label{sec:IndpBrdCstChnl-Secrecy-UpperBound}
\subsection{Secrecy Upper Bound for Independent Broadcast Channels}
The secret key generation capacity among multiple terminals (without 
eavesdropper having access to the broadcast channel) is completely 
characterized in \cite{CsNa-IT08-ChnlSecrecy}. 
By using this result, it is possible to state an upper bound  for the 
secrecy capacity of the key generation problem among multiple terminals 
where the eavesdropper has also access to the broadcast channel.
This can be done by adding a dummy terminal to the first problem and
giving all the eavesdropper's information to this dummy node and let
it to participate in the key generation protocol. By doing so, the 
secret key generation rate does not decrease.
Hence by combining \cite[Theorem~4.1]{CsNa-IT08-ChnlSecrecy} and
\cite[Lemma~5.1]{CsNa-IT08-ChnlSecrecy}, the following result can be 
stated.

\begin{theorem} \label{thm:SecrecyUpBound-CsNa08}
The secret key generation capacity is upper bounded as follows
\par\nobreak{\small
\setlength{\abovedisplayskip}{-4pt}
\begin{gather*}
C_s\le \nonumber\\
\max_{P_{X_0}} \min_{\lambda\in\Lambda([0:m])} \left[ H(X_{[0:m]}|X_\eve) -\sum_{B\subsetneq [0:m]} \lambda_B H(X_B|X_{B^c},X_\eve) \right],
\end{gather*}
}%
where $\Lambda([0:m])$ is the set of all collections 
$\lambda=\left\{ \lambda_B : B\subsetneq [0:m], B\neq \emptyset \right\}$
of weights $0\le\lambda_B\le 1$, satisfying
\begin{equation*}
\sum_{B\subsetneq [0:m], i\in B} \lambda_B = 1, \quad\quad \forall i\in [0:m].
\end{equation*}
Note that in the above expression for the upper bound, it is possible to 
change the order of maximization and minimization 
\cite[Theorem~4.1]{CsNa-IT08-ChnlSecrecy}.
\end{theorem}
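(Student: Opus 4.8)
The plan is to obtain this bound as a corollary of the multiterminal secret-key capacity results of Csisz\'ar and Narayan \cite{CsNa-IT08-ChnlSecrecy}, by reducing our setting --- in which Eve additionally observes the broadcast output $X_\eve$ --- to one of their channel-model settings in which Eve's observation has instead been handed to a cooperating, otherwise passive helper node. Concretely, I would first pin down the only structural difference between our problem and that of \cite{CsNa-IT08-ChnlSecrecy}: there, the eavesdropper sees only the public discussion, whereas here she also sees $X_\eve^n$. Then I would define a comparison problem that is exactly the Csisz\'ar--Narayan channel model with the $m+1$ terminals $\ter_0,\dots,\ter_m$ \emph{together with} one extra terminal $\ter_{m+1}$ whose channel output is a copy of $X_\eve$; in this comparison problem the key is to be recovered by $\ter_0,\dots,\ter_m$ only, node $\ter_{m+1}$ merely contributes to the public discussion, and the eavesdropper observes the public discussion alone. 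Since the channel \eqref{eq:NonCohNetCode-IndpChnlTrnsProb} is memoryless and i.i.d.\ across time slots, this is an admissible instance of their model, with Alice's per-slot input law $P_{X_0}$ the only design freedom.

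Next I would establish the reduction inequality $C_s\le C_{SK}^{\mathrm{mod}}$, where $C_{SK}^{\mathrm{mod}}$ is the secret-key capacity of the comparison problem. Any protocol achieving secret-key rate $R$ in our problem is run verbatim in the comparison problem with $\ter_{m+1}$ staying silent: terminals $\ter_0,\dots,\ter_m$ still reconstruct the same key $\Kca$, and the secrecy requirement for our problem, namely asymptotic independence of $\Kca$ from $(X_\eve^n,\text{public discussion})$, trivially implies asymptotic independence of $\Kca$ from the public discussion alone, which is the secrecy requirement of the comparison problem. Hence every rate achievable for us is achievable there --- equivalently, moving $X_\eve$ from the adversary to a cooperating helper cannot decrease the achievable key rate, which is the ``the secret key generation rate does not decrease'' remark made above.

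It then remains to evaluate $C_{SK}^{\mathrm{mod}}$, which is where I would invoke \cite[Theorem~4.1]{CsNa-IT08-ChnlSecrecy} and \cite[Lemma~5.1]{CsNa-IT08-ChnlSecrecy}. For a fixed input distribution these give the single-letter value $\min_{\lambda\in\Lambda([0:m])}\big[\,H(X_{[0:m]}\mid X_\eve)-\sum_{B\subsetneq[0:m]}\lambda_B\,H(X_B\mid X_{B^c},X_\eve)\,\big]$: the conditioning on $X_\eve$ in every entropy term reflects that $X_\eve$ is exactly the helper's observation (this is the role of \cite[Lemma~5.1]{CsNa-IT08-ChnlSecrecy}), while the inner minimization over the fractional-partition polytope $\Lambda([0:m])$ is the linear-programming dual of the minimum communication-for-omniscience rate. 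Taking the maximum over $P_{X_0}$ (permitted because this is a channel model) and using the fact --- also part of \cite[Theorem~4.1]{CsNa-IT08-ChnlSecrecy} --- that $\max_{P_{X_0}}$ and $\min_\lambda$ may be interchanged gives precisely the claimed bound.

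The step I expect to need the most care is the reduction itself: one must check that substituting ``a trusted, silent helper sees $X_\eve$'' for ``Eve overhears $X_\eve$'' genuinely yields an \emph{upper} bound rather than merely a different quantity --- i.e., that the secrecy constraint is relaxed in exactly the right direction and that adjoining the helper terminal does not violate any structural hypothesis of the channel model of \cite{CsNa-IT08-ChnlSecrecy}. Everything else --- verifying that the non-coherent network-coding broadcast channel is an admissible memoryless channel model, and the LP-duality rewriting hidden inside the $\min_\lambda$ --- is routine and inherited directly from \cite{CsNa-IT08-ChnlSecrecy}.
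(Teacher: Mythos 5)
Your proposal is correct and follows essentially the same route as the paper: the paper likewise adjoins a dummy (helper) terminal that is handed Eve's observation $X_\eve$, notes that this relaxation cannot decrease the achievable key rate, and then invokes \cite[Theorem~4.1]{CsNa-IT08-ChnlSecrecy} together with \cite[Lemma~5.1]{CsNa-IT08-ChnlSecrecy} to obtain the stated single-letter bound with every entropy conditioned on $X_\eve$. Your more careful articulation of the reduction inequality and of why the secrecy constraint is relaxed in the right direction is a faithful expansion of the paper's brief argument, not a different one.
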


Now, for our problem where the channel from Alice to the other terminals
are assumed to be independent, we can further simplify the upper bound
given in Theorem~\ref{thm:SecrecyUpBound-CsNa08}, as stated in 
Corollary~\ref{cor:SecrecyUpBound-CsNa08}.

\begin{corollary}\label{cor:SecrecyUpBound-CsNa08}
If the channels from Alice to the other terminals are independent, as
described in \eqref{eq:NonCohNetCode-IndpChnlTrnsProb}, then the upper 
bound stated in Theorem~\ref{thm:SecrecyUpBound-CsNa08}, for the 
secret key generation capacity is simplified to
\begin{align}
C_s &\le \max_{P_{X_0}} \min_{j\in[1:m]} I(X_0;X_j|X_\eve) \label{eq:SecrecyUpperBnd-CsNa08-Final-1} \\
 &\le \min_{j\in[1:m]} \max_{P_{X_0}} I(X_0;X_j|X_\eve). \label{eq:SecrecyUpperBnd-CsNa08-Final-2}
\end{align}
\end{corollary}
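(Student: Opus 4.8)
The plan is to start from the general bound in Theorem~\ref{thm:SecrecyUpBound-CsNa08} and show that, under the product structure \eqref{eq:NonCohNetCode-IndpChnlTrnsProb}, the inner minimization over $\lambda\in\Lambda([0:m])$ is minimized (or at least upper-bounded in the right way) by a particularly simple choice of weights. The natural candidate for each $j\in[1:m]$ is the collection that puts all its weight on the "singleton-complement" cuts: set $\lambda_{\{0\}}$-type weights so that the only sets $B$ receiving nonzero weight are $B=\{0\}$ and $B=[0:m]\setminus\{j\}$, each with weight $1$. One checks directly that this $\lambda$ lies in $\Lambda([0:m])$: every index $i\in[1:m]\setminus\{j\}$ is covered exactly once by $[0:m]\setminus\{j\}$, index $j$ is covered exactly once by $\{0\}$... wait, that does not cover $j$, so I would instead use the two sets $B_1=\{0,\dots,m\}\setminus\{j\}$ and note index $0$ is covered by it; to cover $j$ use $B_2=[0:m]\setminus B_1 = \{j\}$ — but $\{j\}$ and its complement $[0:m]\setminus\{j\}$ together cover each index exactly once, so $\lambda_{\{j\}}=\lambda_{[0:m]\setminus\{j\}}=1$ and all other weights zero is a valid point of $\Lambda([0:m])$. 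Plugging this into the bracket gives $H(X_{[0:m]}|X_\eve) - H(X_{\{j\}}|X_{[0:m]\setminus\{j\}},X_\eve) - H(X_{[0:m]\setminus\{j\}}|X_j,X_\eve)$, and a short chain-rule manipulation collapses this to $I(X_j;X_{[0:m]\setminus\{j\}}\mid X_\eve)$.

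Next I would invoke the independence structure. Under \eqref{eq:NonCohNetCode-IndpChnlTrnsProb} the channels $X_\alice\to X_i$ are conditionally independent given $X_\alice=X_0$, so conditioned on $(X_0,X_\eve)$ the variables $X_1,\dots,X_m$ are mutually independent; hence $I(X_j;X_{[0:m]\setminus\{j\}}\mid X_\eve)$ can be bounded. The key observation is that the "path" from $X_j$ to any other $X_i$ ($i\neq 0,j$) factors through $X_0$, so $X_j \!-\! X_0 \!-\! (X_i)_{i\neq 0,j}$ form a Markov chain even after conditioning on $X_\eve$ (since $X_\eve$ is also just a noisy function of $X_0$ and the noises are independent). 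Therefore $I(X_j;X_{[0:m]\setminus\{j\}}\mid X_\eve) = I(X_j;X_0\mid X_\eve)$ by the data-processing/Markov argument: the other coordinates add nothing once $X_0$ is conditioned on, and they cannot be removed by more than $X_0$ carries. This yields $C_s\le \max_{P_{X_0}} I(X_0;X_j\mid X_\eve)$ for every $j$, and taking the minimum over $j$ gives \eqref{eq:SecrecyUpperBnd-CsNa08-Final-1}. The bound \eqref{eq:SecrecyUpperBnd-CsNa08-Final-2} is then immediate from the max-min $\le$ min-max inequality.

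The main obstacle I anticipate is the Markov/independence step: one must be careful that conditioning on $X_\eve$ — which is itself correlated with all the $X_i$ through the common parent $X_0$ — does not destroy the conditional independence of $X_1,\dots,X_m$. The clean way to argue this is to note that \eqref{eq:NonCohNetCode-IndpChnlTrnsProb} states exactly that $X_\eve, X_1,\dots,X_m$ are conditionally independent given $X_0$; grouping $X_\eve$ with the rest, the conditional law of $(X_i)_{i\in[1:m]\setminus\{j\}}$ given $(X_0,X_\eve,X_j)$ equals its law given $X_0$ alone, which is precisely the Markov chain $X_j\,-\,(X_0,X_\eve)\,-\,(X_i)_{i\neq 0,j}$ needed to conclude $I(X_j;X_{[0:m]\setminus\{j\}}\mid X_\eve)\le I(X_j;X_0\mid X_\eve)$, with the matching lower bound coming from the fact that $X_0$ can be read off (in distribution, via $\rspan{\cdot}$) from the conditioning set or simply from $I(X_j;X_0\mid X_\eve)\le I(X_j;X_{[0:m]\setminus\{j\}}\mid X_\eve)$ when $0\in[0:m]\setminus\{j\}$. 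A secondary point worth checking is that the chosen $\lambda$ need not be the true minimizer — it only needs to be feasible, since we are upper-bounding a minimum — so no optimality argument over $\Lambda([0:m])$ is required, which keeps the proof short.
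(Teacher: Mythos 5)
Your proposal is correct and follows what is essentially the standard (and surely the paper's) route: for each $j$ the fractional partition $\lambda_{\{j\}}=\lambda_{[0:m]\setminus\{j\}}=1$ is feasible for $\Lambda([0:m])$, the bracket then collapses by the chain rule to $I\bigl(X_j;X_{[0:m]\setminus\{j\}}\mid X_\eve\bigr)$, and the conditional independence in \eqref{eq:NonCohNetCode-IndpChnlTrnsProb} gives the Markov chain $X_j - (X_0,X_\eve) - (X_i)_{i\in[1:m]\setminus\{j\}}$, which reduces this to $I(X_0;X_j\mid X_\eve)$. One small slip in the final assembly: as written you first pass to $\max_{P_{X_0}}$ for each $j$ and then take $\min_j$, which yields \eqref{eq:SecrecyUpperBnd-CsNa08-Final-2} directly but not the stronger bound \eqref{eq:SecrecyUpperBnd-CsNa08-Final-1}. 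To get \eqref{eq:SecrecyUpperBnd-CsNa08-Final-1}, keep $P_{X_0}$ fixed, observe that $\min_{\lambda}[\cdots]\le \min_{j} I(X_0;X_j\mid X_\eve)$ because every $j$ furnishes a feasible $\lambda$, and only then maximize over $P_{X_0}$; \eqref{eq:SecrecyUpperBnd-CsNa08-Final-2} then follows by max--min $\le$ min--max, as you say. Everything else, including your care about conditioning on $X_\eve$ not destroying the conditional independence of $X_1,\dots,X_m$ given $X_0$, is sound.
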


\begin{proof}
 For the proof please refer to \cite{JafFra-TechRep2012-NonCohNetCodSecrecy}.
\end{proof}

\begin{remark}
Note that \eqref{eq:SecrecyUpperBnd-CsNa08-Final-1} is the best upper
bound one might hope for an independent broadcast channel using
results of \cite{CsNa-IT08-ChnlSecrecy}.
\end{remark}

\begin{remark}
Using \cite[Theorem~7]{Ma-IT93} or \cite[Theorem~2]{AhlCs-IT93-P1}, we observe 
that the bound given in \eqref{eq:SecrecyUpperBnd-CsNa08-Final-2} is indeed 
tight for the two terminals problem where we 
have the Markov chains $X_\bob\leftrightarrow X_\alice \leftrightarrow X_\eve$
(when the channels are independent)
or $X_\alice \leftrightarrow X_\bob \leftrightarrow X_\eve$
(when the channels are degraded). 
\end{remark}

\subsection{Upper Bound for Non-coherent Channel}\label{sec:NonCohChnl-Secrecy-UpperBound}
In the previous section, we have shown that the secret key generation 
rate for our problem can be upper bounded by \eqref{eq:SecrecyUpperBnd-CsNa08-Final-2}.
Now, we need to evaluate the above upper bound for the non-coherent 
network coding channel defined in \S\ref{sec:SecShar-NonCohNetCode-ProbStatement}.

\begin{lemma}\label{lem:Concavity-I(X;Y|Z)-wrt-PX}
For the joint distribution of the form 
\par\nobreak{\small
\setlength{\abovedisplayskip}{-4pt}
\begin{equation*}
P_{X_\alice X_i X_\eve}(x_\alice,x_i,x_\eve)= P_{X_\alice}(x_\alice) P_{X_i|X_\alice}(x_i|x_\alice) P_{X_\eve|X_\alice}(x_\eve|x_\alice)
\end{equation*}}%
the mutual information $I(X_\alice;X_i|X_\eve)$ is a concave function of 
$P_{X_\alice}(x_\alice)$ for fixed $P_{X_i|X_\alice}(x_i|x_\alice)$ and 
$P_{X_\eve|X_\alice}(x_\eve|x_\alice)$.
\end{lemma}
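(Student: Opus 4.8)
The plan is to use the conditional independence $X_i \leftrightarrow X_\alice \leftrightarrow X_\eve$ encoded in the assumed factorization, which is essentially the only place the hypothesis is used. First I would split
\[
I(X_\alice;X_i|X_\eve) = H(X_i|X_\eve) - H(X_i|X_\alice,X_\eve) = H(X_i|X_\eve) - H(X_i|X_\alice),
\]
the last step because the factorization gives $P_{X_i|X_\alice X_\eve}=P_{X_i|X_\alice}$. The subtracted term is \emph{affine} in $P_{X_\alice}$: indeed $H(X_i|X_\alice)=\sum_{x_\alice}P_{X_\alice}(x_\alice)\,H(X_i|X_\alice=x_\alice)$, and each $H(X_i|X_\alice=x_\alice)$ is a functional of the fixed channel $P_{X_i|X_\alice}$ alone. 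Hence it suffices to prove that $P_{X_\alice}\mapsto H(X_i|X_\eve)$ is concave, and then conclude that $I(X_\alice;X_i|X_\eve)$ is concave as (concave) minus (affine).

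For the concavity of $H(X_i|X_\eve)$ I would use a time-sharing auxiliary variable. Fix two input laws $P^{(0)}_{X_\alice},P^{(1)}_{X_\alice}$ and $\mu\in[0,1]$, and let $Q$ take value $1$ with probability $\mu$; conditionally on $Q=k$ draw $X_\alice\sim P^{(k)}_{X_\alice}$ and then generate $X_i,X_\eve$ through the fixed channels. Since $P_{X_iX_\eve}$ depends linearly on $P_{X_\alice}$, the marginal of $(X_i,X_\eve)$ is exactly the one induced by $P_{X_\alice}=\mu P^{(1)}_{X_\alice}+(1-\mu)P^{(0)}_{X_\alice}$. Then ``conditioning reduces entropy'' gives
\[
H(X_i|X_\eve)\ \ge\ H(X_i|X_\eve,Q)\ =\ \mu\,H(X_i|X_\eve,Q=1)+(1-\mu)\,H(X_i|X_\eve,Q=0),
\]
and the right-hand side is precisely $\mu$ times the value of $H(X_i|X_\eve)$ under $P^{(1)}_{X_\alice}$ plus $(1-\mu)$ times its value under $P^{(0)}_{X_\alice}$. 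This is the concavity inequality; passing from a two-term to an arbitrary finite convex combination of $P_{X_\alice}$ is immediate, either by induction or by letting $Q$ range over more values.

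The step to be careful about — and the reason the statement is not a one-liner — is the subtraction: the tempting identity $H(X_i|X_\eve)=H(X_i,X_\eve)-H(X_\eve)$ writes the quantity as a concave minus a convex function of $P_{X_\alice}$, which is inconclusive, so the cancellation must instead be arranged at the level of $H(X_i|X_\alice,X_\eve)$, where the hypothesis makes it affine. Beyond that it is only bookkeeping: checking that $H(X_i|X_\alice=x_\alice)$ and $H(X_i|X_\eve,Q=k)$ are genuinely evaluations of the same functionals at the component distributions. For completeness I would also note that the gap in the concavity inequality is exactly $H(X_i|X_\eve)-H(X_i|X_\eve,Q)=I(Q;X_i|X_\eve)\ge 0$, which both re-derives the concavity and identifies the concavity deficit with a mutual information.
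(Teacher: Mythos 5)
Your proof is correct. The paper itself defers this lemma to its technical report, but your argument is the canonical one and almost certainly what is intended there: use the conditional independence $X_i \leftrightarrow X_\alice \leftrightarrow X_\eve$ to rewrite $I(X_\alice;X_i|X_\eve)=H(X_i|X_\eve)-H(X_i|X_\alice)$, observe that the subtracted term is affine in $P_{X_\alice}$, and establish concavity of $H(X_i|X_\eve)$ via a time-sharing variable $Q$ together with $H(X_i|X_\eve)\ge H(X_i|X_\eve,Q)$, using that $P_{X_iX_\eve}$ depends linearly on $P_{X_\alice}$. Your side remarks — that the naive split $H(X_i,X_\eve)-H(X_\eve)$ is inconclusive, and that the concavity deficit equals $I(Q;X_i|X_\eve)$ — are both accurate.
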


\begin{proof}
For the proof please refer to \cite{JafFra-TechRep2012-NonCohNetCodSecrecy}.
\end{proof}

Similar to \cite[Definition~5]{JaMoFrDi-IT11}, here we define an equivalent 
subspace broadcast channel from Alice (terminal $\ter_0$) to the rest of terminals as follows. 
We assume that Alice sends
a subspace $\Pi_\alice\in\mc{S}(\ell,n_\alice)$ where $\Pi_\alice=\rspan{X_\alice}$ 
and each of the legitimate 
terminals receives $\Pi_i\in\mc{S}(\ell,n_i)$ and Eve receives $\Pi_\eve\in\mc{S}(\ell,n_\eve)$
where $\Pi_i=\rspan{X_i}$ and $\Pi_\eve=\rspan{X_\eve}$, respectively.
The channel transition probabilities are independent and for each receiver $i$ 
is defined as follows
\begin{equation*}
P_{\Pi_i|\Pi_\alice}(\pi_i|\pi_\alice) \triangleq
\left\{\begin{array}{ll}
\xi \big(n_i,\dim(\pi_i) \big) q^{-n_i\dim(\pi_\alice)} & \text{if } \pi_i\sqsubseteq \pi_\alice,\\
0 & \text{otherwise},
\end{array} \right.
\end{equation*}
where the function $\xi$ is defined in Definition~\ref{eq:Definition-xi}.

\begin{lemma}\label{lem:I(X;Y|Z)=I(Pi_X;Pi_Y|Pi_Z)}
For every input distribution $P_{X_\alice}$ there exists an input distribution
$P_{\Pi_\alice}$ such that $I(X_\alice;X_i|X_\eve)=I(\Pi_\alice;\Pi_i|\Pi_\eve)$
and vice-versa.
\end{lemma}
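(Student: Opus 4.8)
The plan is to establish the equivalence by exhibiting an explicit correspondence between input distributions on matrices $X_\alice$ and input distributions on subspaces $\Pi_\alice$, and to show that the relevant conditional mutual informations are preserved in both directions. The key observation is that in the channel \eqref{eq:NonCohNetCodSecrecy-MatrixChnl}, the transition probability $P_{X_r|X_\alice}$ depends on $x_\alice$ only through its row span $\rspan{x_\alice}$ and its dimension, and similarly the output $X_r$ is relevant only through $\rspan{X_r}$; this is exactly what the definition of $\xi(n,d)$ in Definition~\ref{eq:Definition-xi} is designed to capture. So first I would decompose the matrix channel as a two-stage process: Alice picks a subspace $\Pi_\alice$ (with some distribution), then picks a matrix $X_\alice$ with $\rspan{X_\alice}=\Pi_\alice$ according to some conditional law; each receiver $r$ then gets $X_r$ whose span $\Pi_r$ depends only on $\Pi_\alice$ (via the subspace channel $P_{\Pi_r|\Pi_\alice}$ given above), and then $X_r$ is a uniformly random matrix of the appropriate size with that span.

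The main steps: (i) Given $P_{X_\alice}$, define $P_{\Pi_\alice}$ as the induced distribution on $\rspan{X_\alice}$; conversely, given $P_{\Pi_\alice}$, define $P_{X_\alice}$ by drawing $\Pi_\alice$ and then a uniformly random full-rank-generating matrix with that span. (ii) Show that in either case the joint law of $(\Pi_\alice,\Pi_i,\Pi_\eve)$ obtained by taking spans factors as required and matches the subspace channel. (iii) Argue that $X_i \leftrightarrow \Pi_i$ and $X_\eve \leftrightarrow \Pi_\eve$ carry the same information about $X_\alice$ (equivalently $\Pi_\alice$): since, conditioned on its span, each received matrix $X_r$ is uniform over matrices with that span and independent of everything else (this is where \cite[Lemma~2]{JaMoFrDi-IT11} / the structure of $\xi$ enters), we have the Markov chains $X_\alice \leftrightarrow \Pi_\alice \leftrightarrow (\Pi_i,\Pi_\eve)$ and $(\Pi_i,\Pi_\eve)\leftrightarrow(X_i,X_\eve)$ is a bijection-up-to-span that adds only span-independent randomness. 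Then a short calculation with the chain rule for conditional mutual information gives $I(X_\alice;X_i|X_\eve)=I(\Pi_\alice;\Pi_i|\Pi_\eve)$ when $X_\alice$ is itself a deterministic function of $\Pi_\alice$ in the "vice-versa" direction, and $I(X_\alice;X_i|X_\eve)\le I(\Pi_\alice;\Pi_i|\Pi_\eve)$ in general with equality because the extra randomness in $X_\alice$ beyond $\Pi_\alice$ is independent of the outputs given $\Pi_\alice$, so it contributes nothing.

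Concretely, for the forward direction I would write $I(X_\alice;X_i|X_\eve) = I(X_\alice,\Pi_\alice;X_i,\Pi_i|X_\eve,\Pi_\eve)$ (since spans are functions of the matrices), expand both ways with the chain rule, and kill the cross terms using the conditional independencies: $I(\Pi_\alice;X_i,\Pi_i|X_\eve,\Pi_\eve,X_\alice)=0$ because given $X_\alice$ the outputs depend on nothing else, and $I(X_\alice;X_i,\Pi_i|X_\eve,\Pi_\eve,\Pi_\alice)=0$ because given $\Pi_\alice$ the outputs are conditionally independent of $X_\alice$, and finally $I(X_\alice,\Pi_\alice;X_i|X_\eve,\Pi_\eve,\Pi_i)=0$ because given $\Pi_i$ the matrix $X_i$ is just span-independent uniform noise. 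This reduces the expression to $I(\Pi_\alice;\Pi_i|\Pi_\eve)$.

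The main obstacle I anticipate is being careful about the "vice-versa" direction and the exact meaning of "there exists": starting from an arbitrary $P_{\Pi_\alice}$, the natural lift to $P_{X_\alice}$ (uniform generating matrix given the span) must be shown to induce back the same $P_{\Pi_\alice}$ and the same subspace channel law — this is a bookkeeping argument using that $\xi(n_\alice,\dim(\pi_\alice))$ counts exactly the matrices with a given span, so the uniform choice is well-defined and the marginal on spans is preserved. Everything else is routine manipulation of the factorization \eqref{eq:NonCohNetCode-IndpChnlTrnsProb} and the chain rule; the only real content is the repeated use of the fact (from the structure of $\xi$ and \cite{JaMoFrDi-IT11}) that a received matrix is, conditioned on its span, uniformly distributed and independent of all other variables.
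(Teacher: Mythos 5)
The paper does not reproduce its own proof here (it defers to the technical report), but your argument is correct and is exactly the natural sufficient-statistic argument extending \cite[Definition~5, Theorem~1]{JaMoFrDi-IT11} to the conditional mutual information: the channel law depends on $x_\alice$ only through $\rspan{x_\alice}$, and each received matrix is, given its row span, uniform and independent of everything else, so the chain-rule cancellations you list (plus the analogous one replacing $X_\eve$ by $\Pi_\eve$ in the conditioning) give $I(X_\alice;X_i|X_\eve)=I(\Pi_\alice;\Pi_i|\Pi_\eve)$, with the lift of $P_{\Pi_\alice}$ to a uniform generating matrix handling the converse direction. No gaps.
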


\begin{proof}
For the proof please refer to \cite{JafFra-TechRep2012-NonCohNetCodSecrecy}.
\end{proof}

So by Lemma~\ref{lem:I(X;Y|Z)=I(Pi_X;Pi_Y|Pi_Z)}, in order to maximize
$I(X_\alice;X_i|X_\eve)$ with respect to $P_{X_\alice}$ it is sufficient
to solve an equivalent problem, \ie, maximize $I(\Pi_\alice;\Pi_i|\Pi_\eve)$
with respect to $P_{\Pi_\alice}$; which is seemingly a simpler optimization
problem.

\begin{lemma}\label{lem:I(X;Y|Z)-UnifDistMaximize}
The input distribution that maximizes $I(\Pi_\alice;\Pi_i|\Pi_\eve)$ 
is the one which is uniform over all subspaces having the same dimension.
\end{lemma}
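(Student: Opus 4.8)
The plan is to exploit the symmetry of the subspace channel under the action of the general linear group $\mathrm{GL}_\ell(\Fbb_q)$, together with the concavity established in Lemma~\ref{lem:Concavity-I(X;Y|Z)-wrt-PX} (transported to the subspace picture via Lemma~\ref{lem:I(X;Y|Z)=I(Pi_X;Pi_Y|Pi_Z)}). First I would observe that the channel law $P_{\Pi_i|\Pi_\alice}$ and $P_{\Pi_\eve|\Pi_\alice}$ depend on the subspaces only through containment relations and dimensions, both of which are preserved by any invertible linear map $g\in\mathrm{GL}_\ell(\Fbb_q)$ acting simultaneously on $\Pi_\alice$, $\Pi_i$, and $\Pi_\eve$. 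Hence for any fixed $g$, replacing the input distribution $P_{\Pi_\alice}$ by its pushforward $g\cdot P_{\Pi_\alice}$ leaves the value $I(\Pi_\alice;\Pi_i|\Pi_\eve)$ unchanged.

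Next I would average over the group: given any maximizing input distribution $P^\star_{\Pi_\alice}$, form the mixture $\bar P_{\Pi_\alice} \triangleq \frac{1}{|\mathrm{GL}_\ell(\Fbb_q)|}\sum_{g} g\cdot P^\star_{\Pi_\alice}$. By concavity of $I(\Pi_\alice;\Pi_i|\Pi_\eve)$ in the input distribution (Lemma~\ref{lem:I(X;Y|Z)-UnifDistMaximize}'s hypothesis being exactly the setting of Lemma~\ref{lem:Concavity-I(X;Y|Z)-wrt-PX} after the translation of Lemma~\ref{lem:I(X;Y|Z)=I(Pi_X;Pi_Y|Pi_Z)}), the value at $\bar P_{\Pi_\alice}$ is at least the average of the values at the $g\cdot P^\star_{\Pi_\alice}$, each of which equals the optimum; so $\bar P_{\Pi_\alice}$ is also optimal. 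Finally, since $\mathrm{GL}_\ell(\Fbb_q)$ acts transitively on the set of subspaces of any given dimension $d$, the distribution $\bar P_{\Pi_\alice}$ is $\mathrm{GL}_\ell$-invariant and therefore assigns equal probability to all subspaces of the same dimension; that is, conditioned on $\dim(\Pi_\alice)=d$, $\bar P_{\Pi_\alice}$ is uniform on $\mc{S}(\ell,d)\setminus\mc{S}(\ell,d-1)$. This proves there is a maximizer of the claimed form.

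I would then need one additional observation to match the statement exactly as phrased: the argument above only forces uniformity \emph{within each dimension class}; it does not by itself pin down the marginal distribution over the dimension $\dim(\Pi_\alice)$. The cleanest route is to note that the statement "uniform over all subspaces having the same dimension" is precisely this within-class uniformity, so no more is needed; but if one wants the optimal dimension itself, that is a separate one-dimensional optimization over the induced distribution on $\{0,1,\ldots,n_\alice\}$, which can be left to the explicit capacity computation that presumably follows.

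The main obstacle is not the group-averaging idea, which is standard, but making sure the concavity argument is applied to the right functional: concavity holds for $I(X_\alice;X_i|X_\eve)$ as a function of $P_{X_\alice}$, and one must check that the bijection in Lemma~\ref{lem:I(X;Y|Z)=I(Pi_X;Pi_Y|Pi_Z)} is affine (maps mixtures to mixtures) so that concavity transfers to $I(\Pi_\alice;\Pi_i|\Pi_\eve)$ as a function of $P_{\Pi_\alice}$. Since that correspondence is built from an aggregation of the $X$-distribution by row span — a linear operation on probability vectors — this should go through, but it is the step I would write out with care.
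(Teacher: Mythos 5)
Your proposal is correct and is essentially the paper's own argument: the paper simply invokes the concavity of Lemma~\ref{lem:Concavity-I(X;Y|Z)-wrt-PX} and defers to the symmetrization argument of \cite[Lemma~8]{JaMoFrDi-IT11}, which is exactly the $\mathrm{GL}_\ell(\Fbb_q)$ group-averaging you spell out. Your closing caveats (that only within-dimension uniformity is forced, and that concavity must transfer through the affine correspondence of Lemma~\ref{lem:I(X;Y|Z)=I(Pi_X;Pi_Y|Pi_Z)}) are both apt and consistent with how the statement is used later.
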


\begin{proof}
By the concavity of $I(\Pi_\alice;\Pi_i|\Pi_\eve)$ with respect to 
$P_{\Pi_\alice}$, that is stated in Lemma~\ref{lem:Concavity-I(X;Y|Z)-wrt-PX}, 
the proof follows by an argument very similar to \cite[Lemma~8]{JaMoFrDi-IT11}.
\end{proof}

\begin{lemma}\label{lem:MultTerSec-NonCohNetCode-UpperBound}
Asymptotically in the field size, we have
\begin{gather*}
\max_{P_{X_\alice}}I(X_\alice;X_i|X_\eve) = \max_{P_{\Pi_\alice}} I(\Pi_\alice;\Pi_i|\Pi_\eve)= \nonumber\\
 \left( \min[n_\alice,n_i+n_\eve]-n_\eve \right) \left( \ell - \min[n_\alice,n_i+n_\eve] \right) \log{q}.
\end{gather*}
\end{lemma}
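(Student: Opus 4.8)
The plan is to work entirely in the subspace domain, using Lemma~\ref{lem:I(X;Y|Z)=I(Pi_X;Pi_Y|Pi_Z)} and Lemma~\ref{lem:I(X;Y|Z)-UnifDistMaximize} to reduce the computation to evaluating $I(\Pi_\alice;\Pi_i|\Pi_\eve)$ under the specific input distribution that is uniform over all subspaces of a fixed dimension, say $\dim(\Pi_\alice)=d$, and then optimizing over $d\in[0:n_\alice]$. Since $I(\Pi_\alice;\Pi_i|\Pi_\eve)=H(\Pi_\alice|\Pi_\eve)-H(\Pi_\alice|\Pi_i,\Pi_\eve)$, the first step is to compute each of these two conditional entropies to leading order in $\log q$. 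The guiding principle is that, asymptotically in $q$, a uniformly random $d$-dimensional subspace carries entropy $d(\ell-d)\log q + O(1)$ (this is the log of the Gaussian binomial $\gaussnum{\ell}{d}$), and more generally the ``number of degrees of freedom'' of the relevant configuration of subspaces is governed by the generic dimension counts supplied by Lemma~\ref{lem:Uni-and-Joint-k-RandSubSpace}.

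Concretely, I would proceed as follows. First, fix $\dim(\Pi_\alice)=d$. The received subspaces $\Pi_i\sqsubseteq\Pi_\alice$ and $\Pi_\eve\sqsubseteq\Pi_\alice$ are, conditioned on $\Pi_\alice$, independent uniformly random subspaces of $\Pi_\alice$ of dimensions $\min[n_i,d]$ and $\min[n_\eve,d]$ respectively (this is exactly the content of the subspace channel law, using that a uniform random $n_i\times n_\alice$ matrix times a matrix of rank $d$ produces a uniform random subspace of the row space of dimension $\min[n_i,d]$). So set $d_i=\min[n_i,d]$ and $d_\eve=\min[n_\eve,d]$. For $H(\Pi_\alice|\Pi_\eve)$: given $\Pi_\eve$ (a $d_\eve$-dimensional subspace), $\Pi_\alice$ is a uniformly random $d$-dimensional subspace of $\Fbb_q^\ell$ containing $\Pi_\eve$, so $H(\Pi_\alice|\Pi_\eve)=(d-d_\eve)(\ell-d)\log q + O(1)$ by the Gaussian-binomial count for subspaces of an $(\ell-d_\eve)$-dimensional quotient of dimension $d-d_\eve$. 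For $H(\Pi_\alice|\Pi_i,\Pi_\eve)$: given both $\Pi_i$ and $\Pi_\eve$, we need the count of $d$-dimensional subspaces containing $\Pi_i+\Pi_\eve$; generically (w.h.p. in $q$, by Lemma~\ref{lem:Uni-and-Joint-k-RandSubSpace} applied inside $\Pi_\alice$) $\dim(\Pi_i+\Pi_\eve)=\min[d_i+d_\eve,d]$, giving $H(\Pi_\alice|\Pi_i,\Pi_\eve)=(d-\min[d_i+d_\eve,d])(\ell-d)\log q + O(1)$ — and here I must also check that the low-probability atypical events contribute only $O(1)$ to the entropy (they do, since entropies of subspace-valued variables are $O(\log q)$ and the bad events have probability $O(q^{-1})$, though one should state this carefully, e.g. via a grouping/Fano-type bound). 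Subtracting, $I(\Pi_\alice;\Pi_i|\Pi_\eve) = (\min[d_i+d_\eve,d]-d_\eve)(\ell-d)\log q + O(1)$.

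The last step is the optimization over $d$. Substituting $d_i=\min[n_i,d]$, $d_\eve=\min[n_\eve,d]$, one checks by a short case analysis on the ordering of $d$, $n_i$, $n_\eve$, $n_i+n_\eve$ that the coefficient $(\min[\min[n_i,d]+\min[n_\eve,d],d]-\min[n_\eve,d])(\ell-d)$ is maximized at $d=\min[n_\alice,n_i+n_\eve]$, where it equals $(\min[n_\alice,n_i+n_\eve]-n_\eve)(\ell-\min[n_\alice,n_i+n_\eve])$ — note this is nonnegative since $d\le n_\alice$ forces $d\ge d_\eve$, and one uses $n_\eve\le n_\alice$ implicitly, or the expression is interpreted as $0$ when $n_\eve\ge n_\alice$. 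Combined with Lemma~\ref{lem:I(X;Y|Z)=I(Pi_X;Pi_Y|Pi_Z)} and Lemma~\ref{lem:I(X;Y|Z)-UnifDistMaximize}, this yields the claimed formula.

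I expect the main obstacle to be the rigorous handling of the $O(1)$ (really $o(\log q)$) error terms: the dimension identities from Lemma~\ref{lem:Uni-and-Joint-k-RandSubSpace} hold only w.h.p., so one must argue that conditioning on the generic event changes each conditional entropy by at most $o(\log q)$, which requires bounding the entropy contribution of the atypical subspace configurations and invoking that the total number of subspaces involved is $q^{O(1)}$. The pure counting (Gaussian binomials, and that their logarithms are $d(\ell-d)\log q+O(1)$) and the final discrete optimization over $d$ are routine by comparison.
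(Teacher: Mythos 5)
Your overall route --- pass to the subspace channel via Lemma~\ref{lem:I(X;Y|Z)=I(Pi_X;Pi_Y|Pi_Z)}, invoke Lemma~\ref{lem:I(X;Y|Z)-UnifDistMaximize} to restrict to inputs uniform on a fixed dimension $d$, compute $I(\Pi_\alice;\Pi_i|\Pi_\eve)=H(\Pi_\alice|\Pi_\eve)-H(\Pi_\alice|\Pi_i,\Pi_\eve)$ by Gaussian-binomial counting with Lemma~\ref{lem:Uni-and-Joint-k-RandSubSpace} supplying the generic dimensions, and then optimize over $d$ --- is exactly the path the surrounding lemmas set up, and your per-dimension evaluation $I(\Pi_\alice;\Pi_i|\Pi_\eve)=(\min[d_i+d_\eve,d]-d_\eve)(\ell-d)\log q+O(1)$ is correct (the posteriors are indeed exactly uniform over $d$-dimensional subspaces containing $\Pi_\eve$, resp.\ $\Pi_i+\Pi_\eve$, and the atypical-dimension events contribute $O(q^{-1}\ell^2\log q)=o(1)$). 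Two small points you gloss over: Lemma~\ref{lem:I(X;Y|Z)-UnifDistMaximize} gives uniformity \emph{within} each dimension, so you should add the standard one-line argument that mixing over dimensions gains at most $H(\dim\Pi_\alice)\le\log(n_\alice+1)=o(\log q)$, and the formula needs $n_\eve\le n_\alice$ (or a $[\cdot]^+$), as you note.

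The genuine gap is in the step you call routine: the maximization over $d$. For $n_\eve<d\le n_i+n_\eve$ your coefficient is the downward parabola $(d-n_\eve)(\ell-d)$, whose vertex sits at $d_0=(\ell+n_\eve)/2$; a case analysis ``on the ordering of $d$, $n_i$, $n_\eve$, $n_i+n_\eve$'' cannot decide where the maximum lies because $\ell$ enters the comparison. The maximizer is the endpoint $d^\ast=\min[n_\alice,n_i+n_\eve]$ \emph{only if} $\ell\ge 2\min[n_\alice,n_i+n_\eve]-n_\eve$; otherwise (e.g.\ $\ell=10$, $n_\alice=n_i=8$, $n_\eve=2$, where $d=6$ gives $16\log q$ versus the stated $12\log q$) the optimum is at an interior $d\approx(\ell+n_\eve)/2$ and exceeds the claimed value. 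So you must either carry $\ell$ through the case analysis and add the hypothesis $\ell\ge 2\min[n_\alice,n_i+n_\eve]-n_\eve$ (the long-packet regime that is implicit throughout this packetized-network setting, and under which the parabola is increasing on the whole feasible range so the endpoint argument goes through), or restate the conclusion with the truncation at $\lfloor(\ell+n_\eve)/2\rfloor$. As written, the final sentence of your optimization step is false without such a condition, and it is precisely the place where the only nontrivial hypothesis of the lemma hides.
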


\begin{proof}
For the proof refer to \cite{JafFra-TechRep2012-NonCohNetCodSecrecy}.
\end{proof}

Thus, by using the upper bound given in \eqref{eq:SecrecyUpperBnd-CsNa08-Final-2} and 
Lemma~\ref{lem:MultTerSec-NonCohNetCode-UpperBound} we have the following
result for the upper bound on the secret key generation rate, 
as stated in Theorem~\ref{thm:MultTerSec-NonCohNetCode-UpperBound}.
\begin{theorem}\label{thm:MultTerSec-NonCohNetCode-UpperBound}
The secret key generation rate in a non-coherent network coding scenario,
which is defined in \S\ref{sec:SecShar-NonCohNetCode-ProbStatement},
is asymptotically (in the field size) upper bounded by
\par\nobreak{\small
\setlength{\abovedisplayskip}{-4pt}
\begin{gather*}
C_s \le \nonumber\\ 
\min_{i\in[1:m]} \big[ \left( \min[n_\alice,n_i+n_\eve]-n_\eve \right) 
\left( \ell - \min[n_\alice,n_i+n_\eve] \right) \big] \log{q}.
\end{gather*}}%
\end{theorem}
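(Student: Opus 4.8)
The plan is to derive Theorem~\ref{thm:MultTerSec-NonCohNetCode-UpperBound} as a direct consequence of two ingredients already in hand: the channel-agnostic upper bound \eqref{eq:SecrecyUpperBnd-CsNa08-Final-2}, namely $C_s \le \min_{j\in[1:m]} \max_{P_{X_0}} I(X_0;X_j|X_\eve)$, and Lemma~\ref{lem:MultTerSec-NonCohNetCode-UpperBound}, which evaluates each term $\max_{P_{X_\alice}} I(X_\alice;X_i|X_\eve)$ in the large-$q$ regime. First I would invoke Corollary~\ref{cor:SecrecyUpBound-CsNa08}, whose hypothesis — independence of the channels from Alice to the remaining terminals — holds here by \eqref{eq:NonCohNetCode-IndpChnlTrnsProb}; this licenses the use of \eqref{eq:SecrecyUpperBnd-CsNa08-Final-2}. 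Then I would substitute, for each $i\in[1:m]$, the closed-form value of $\max_{P_{X_\alice}} I(X_\alice;X_i|X_\eve)$ supplied by Lemma~\ref{lem:MultTerSec-NonCohNetCode-UpperBound}, obtaining $C_s \le \min_{i\in[1:m]} \big(\min[n_\alice,n_i+n_\eve]-n_\eve\big)\big(\ell-\min[n_\alice,n_i+n_\eve]\big)\log q$, which is exactly the claimed bound.

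Since Lemma~\ref{lem:MultTerSec-NonCohNetCode-UpperBound} is the substantive step and its proof is deferred to the technical report, I would at least sketch why that evaluation is plausible, so the theorem is not purely a citation. The chain of reductions is: by Lemma~\ref{lem:I(X;Y|Z)=I(Pi_X;Pi_Y|Pi_Z)} it suffices to maximize $I(\Pi_\alice;\Pi_i|\Pi_\eve)$ over subspace-valued inputs; by Lemma~\ref{lem:I(X;Y|Z)-UnifDistMaximize} (which leans on the concavity in Lemma~\ref{lem:Concavity-I(X;Y|Z)-wrt-PX}) the optimizer is uniform on subspaces of a fixed dimension $d=\dim(\Pi_\alice)$, so one is left optimizing a single integer parameter $d\in[0:n_\alice]$. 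For uniform input, $\Pi_i$ and $\Pi_\eve$ are (asymptotically in $q$) uniformly random subspaces of $\Pi_\alice$ of dimensions $\min[n_i,d]$ and $\min[n_\eve,d]$ respectively, so $H(\Pi_i|\Pi_\eve) = H(\Pi_i) - I(\Pi_i;\Pi_\eve)$ can be read off from the asymptotics of Gaussian binomial coefficients together with Lemma~\ref{lem:Uni-and-Joint-k-RandSubSpace} for the dimension of $\Pi_i\cap\Pi_\eve$ inside $\Pi_\alice$. Carrying the dominant $\log q$ terms through, $I(\Pi_\alice;\Pi_i|\Pi_\eve)$ becomes a product of two nonnegative integer ``gaps'' whose value at the maximizing $d$ equals $\big(\min[n_\alice,n_i+n_\eve]-n_\eve\big)\big(\ell-\min[n_\alice,n_i+n_\eve]\big)\log q$; one checks the two boundary regimes $n_\alice \le n_i+n_\eve$ and $n_\alice > n_i+n_\eve$ separately and verifies the interior $d$ that attains the max.

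The main obstacle is entirely inside Lemma~\ref{lem:MultTerSec-NonCohNetCode-UpperBound}: carefully tracking which terms survive as $q\to\infty$ when expanding the conditional entropies in terms of Gaussian binomials, and confirming that the optimization over $d$ really does produce the stated $\min[n_\alice,n_i+n_\eve]$ pivot rather than some other piecewise-linear expression. In particular, one must be attentive to the conditioning on $\Pi_\eve$: the relevant quantity is the dimension of $\Pi_i$ modulo the part of $\Pi_i$ already determined by $\Pi_\eve$, which is governed by the generic intersection formula of Lemma~\ref{lem:Uni-and-Joint-k-RandSubSpace}, and the error terms there are $O(q^{-1})$, consistent with the ``asymptotically in the field size'' qualifier. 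Given that lemma, however, the theorem itself is immediate — it is the composition of \eqref{eq:SecrecyUpperBnd-CsNa08-Final-2} with a per-$i$ substitution, and no further argument is needed.
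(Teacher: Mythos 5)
Your proposal matches the paper's own argument exactly: the theorem is obtained by composing the upper bound \eqref{eq:SecrecyUpperBnd-CsNa08-Final-2} from Corollary~\ref{cor:SecrecyUpBound-CsNa08} with the per-terminal evaluation in Lemma~\ref{lem:MultTerSec-NonCohNetCode-UpperBound}, and the paper likewise defers the proof of that lemma to the technical report. Your additional sketch of why the lemma holds (reduction to subspace inputs, uniformity of the optimizer, and generic-intersection dimension counting) is consistent with the chain of lemmas the paper sets up, so nothing further is needed.
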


\begin{remark}
Note that if $n_\eve=n_\alice$ then the secret key generation rate is zero
because Eve is so powerful that she overhears all of the transmitted
information.
\end{remark}

\section{Asymptotic Achievability Scheme}\label{sec:NonCohChnl-Secrecy-AchvScheme}
Here in this section, we describe our achievability scheme for the secret 
key sharing problem among multiple terminals in a non-coherent network 
coding setup.

Without loss of generality, let us assume that\footnote{If
$n_\alice \ge \ell$ then Alice can reduce the number of injected 
packets into the network from $n_\alice$ to some smaller number 
$n'_\alice$ where $n'_\alice < \ell$.} $n_\alice < \ell$. 
Moreover, in this work we focus on the asymptotic regime where the 
field size is large. Suppose that Alice
broadcasts a message $X_\alice[t]$ at time-slot $t$ of the following
form
\begin{equation}\label{eq:AchvAliceSpecialFormMsg}
X_\alice[t] = \left[\begin{array}{cc}
I_{n_\alice\times n_\alice} & M[t]
\end{array} \right],
\end{equation}
where $M[t]\in\Fbb_q^{n_\alice\times(\ell-n_\alice)}$ is a uniformly at
random distributed matrix. The rest of legitimate terminals and Eve receive 
a linear transformed version of $X_\alice[t]$
according to the channel introduced in \eqref{eq:NonCohNetCodSecrecy-MatrixChnl}.

For each terminal $r\in\{\alice,1,\ldots,m,\eve\}$, we define
the subspace $\Pi_r \triangleq \rspan{X_r}$. Then, for every $r\neq \alice$
we have $\Pi_r\sqsubseteq \Pi_\alice$.
Because of \eqref{eq:AchvAliceSpecialFormMsg}, after broadcasting $X_\alice[t]$, 
the legitimate terminals learn the channel state and reveal the 
channel transfer matrices $F_r[t]$, $r\in[1:m]$, publicly over the 
public channel. 
Thus Alice can also recover the subspaces $\Pi_r$ for all of 
the legitimate terminals.

Now, for each non-empty subset $\mc{J}\subseteq[1:m]$ of legitimate receivers,
let us define the subspace $U_{\mc{J}}$ as follows
\begin{align}\label{eq:Def_U_J}
U_{\mc{J}} \triangleq  
\Pi_{\mc{J}}\setminus_s \left( \sum_{i\in\mc{J}^c} \Pi_{i\mc{J}} + \Pi_{\eve\mc{J}} \right),
\end{align}
where $\Pi_{\mc{J}}=\cap_{i\in\mc{J}} \Pi_i$, $\Pi_{i\mc{J}}=\Pi_i\cap\Pi_{\mc{J}}$, 
and $\Pi_{\eve\mc{J}}=\Pi_\eve\cap\Pi_{\mc{J}}$. By definition, $U_{\mc{J}}$
is the common subspace among the receivers in $\mc{J}$ which is orthogonal
to all of the subspaces of other terminals, \ie, it is orthogonal to
$\Pi_i$, $i\in\mc{J}^c$, and $\Pi_\eve$ (see also Fig.~\ref{fig:NonCohSecrecy-AchvSchmSubspaces}). Note that the subspaces 
$U_\mc{J}$'s are not uniquely defined. However, from the definition
of the operator ``$\setminus_s$'', it can be easily shown that the dimension 
of each $U_\mc{J}$ is uniquely determined and equal to
\begin{equation}\label{eq:U_J-Dimension}
\dim(U_{\mc{J}}) = \dim(\Pi_{\mc{J}}) - \dim\left( \sum_{i\in\mc{J}^c} \Pi_{i\mc{J}} + \Pi_{\eve\mc{J}} \right).
\end{equation}

If Alice had the subspace $\Pi_\eve$ observed by Eve, she would be able 
to construct subspaces $U_\mc{J}$'s; but she does not have $\Pi_\eve$.
However, because the subspaces $\Pi_i$'s and $\Pi_\eve$ are chosen
independently and uniformly at random from $\Pi_\alice$, and because
the field size $q$ is large, Alice, by applying Lemma~\ref{lem:Uni-and-Joint-k-RandSubSpace}, 
can find the dimension of each $U_\mc{J}$ w.h.p. 
Then it can be easily observed that 
(\eg, see \cite[Lemma~3]{JaFrDi-IT12}) if Alice chooses a uniformly 
at random subspace of $\Pi_\mc{J}$ with dimension $\dim(U_\mc{J})$
then it satisfies \eqref{eq:Def_U_J} w.h.p., so it can be a possible candidate for
$U_\mc{J}$.

Now, consider $2^m-1$ different non-empty subsets of $[1:m]$. To each subset 
$\emptyset\neq\mc{J}\subseteq [1:m]$, we assign a parameter $\theta_{\mc{J}}\ge 0$ 
such that the following set of inequalities hold,
\par\nobreak{\small
\setlength{\abovedisplayskip}{-2pt}
\begin{equation}\label{eq:IneqCondTheta}
\theta_{\mc{J}_1}+\cdots+\theta_{\mc{J}_k} \le \dim\left( U_{\mc{J}_1}+\cdots+U_{\mc{J}_k}+\Pi_\eve \right) - \dim(\Pi_\eve),
\end{equation}}%
for any $k\in[1:2^{(2^m-1)}-1]$ and any different selection of subsets 
$\mc{J}_1,\ldots,\mc{J}_k$. 
Note that the right hand side of the inequalities defined in \eqref{eq:IneqCondTheta}
depend on the actual choice of subspaces $U_{\mc{J}}$'s. But, as described above, 
in the following we assume that $U_\mc{J}$'s are chosen uniformly at random from
$\Pi_\mc{J}$.

If Alice knows the subspace $\Pi_\eve$, then we can state the 
following result.

\begin{lemma}\label{lem:ExtractComplementarySubspace-U}
There exists subspaces $U'_{\mc{J}}\sqsubseteq U_{\mc{J}}$ such that
$\dim(U'_{\mc{J}})=\theta_{\mc{J}}$ for all $\emptyset\neq\mc{J}\subseteq [1:m]$, and
$U'_{\mc{J}}$'s and $\Pi_\eve$ are orthogonal subspaces (\ie,
$\dim(\Pi_\eve + \sum_i U'_{\mc{J}_i})= \dim(\Pi_\eve) + \sum_i \theta_{\mc{J}_i}$) if and only if
$\theta_{\mc{J}}$'s are non-negative integers and satisfy 
\eqref{eq:IneqCondTheta}.
\end{lemma}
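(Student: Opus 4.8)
The plan is to prove the two directions of the equivalence separately, treating the "only if" direction as the easy bookkeeping step and the "if" direction as the substantive construction.

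\textbf{The ``only if'' direction.} Suppose the subspaces $U'_{\mc{J}}$ exist with $\dim(U'_{\mc{J}})=\theta_{\mc{J}}$, with each $U'_{\mc{J}}\sqsubseteq U_{\mc{J}}$, and with the whole collection together with $\Pi_\eve$ in direct sum. Non-negativity and integrality of $\theta_{\mc{J}}$ are immediate since it is a dimension. For \eqref{eq:IneqCondTheta}, fix any selection $\mc{J}_1,\ldots,\mc{J}_k$. Monotonicity of dimension under $\sqsubseteq$ gives $\dim(U'_{\mc{J}_1}+\cdots+U'_{\mc{J}_k}+\Pi_\eve)\le \dim(U_{\mc{J}_1}+\cdots+U_{\mc{J}_k}+\Pi_\eve)$, while the assumed directness of the sum of all the $U'$ with $\Pi_\eve$ forces in particular $\dim(U'_{\mc{J}_1}+\cdots+U'_{\mc{J}_k}+\Pi_\eve)=\dim(\Pi_\eve)+\theta_{\mc{J}_1}+\cdots+\theta_{\mc{J}_k}$. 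Combining the two lines yields \eqref{eq:IneqCondTheta}.

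\textbf{The ``if'' direction.} This is the heart of the lemma. Assume the $\theta_{\mc{J}}$ are non-negative integers satisfying \eqref{eq:IneqCondTheta}; I want to build the $U'_{\mc{J}}$. The natural framework is a matroid-union / transversal argument inside the quotient space $V \triangleq \Pi_\alice / \Pi_\eve$ (or, equivalently, inside a complement of $\Pi_\eve$ in $\Pi_\alice$). Let $\bar U_{\mc{J}}$ denote the image of $U_{\mc{J}}$ in $V$; since each $U_{\mc{J}}$ is orthogonal to $\Pi_\eve$ by construction (from \eqref{eq:Def_U_J}, $U_{\mc{J}}$ is complementary to $\Pi_{\mc{J}}\cap(\sum_{i\in\mc{J}^c}\Pi_{i\mc{J}}+\Pi_{\eve\mc{J}})$, which contains $\Pi_{\eve\mc{J}}=\Pi_\eve\cap\Pi_{\mc{J}}$), the projection is injective on each $U_{\mc{J}}$, so $\dim\bar U_{\mc{J}}=\dim U_{\mc{J}}$, and moreover $\dim(U_{\mc{J}_1}+\cdots+U_{\mc{J}_k}+\Pi_\eve)-\dim(\Pi_\eve)=\dim(\bar U_{\mc{J}_1}+\cdots+\bar U_{\mc{J}_k})$. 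So \eqref{eq:IneqCondTheta} says exactly: for every sub-collection, $\sum \theta_{\mc{J}_i}\le \dim\big(\sum_i \bar U_{\mc{J}_i}\big)$. This is precisely the condition under which one can choose, inside each $\bar U_{\mc{J}}$, a subspace $\bar U'_{\mc{J}}$ of dimension $\theta_{\mc{J}}$ such that the chosen subspaces are jointly in direct sum — this is the linear-algebra (or representable-matroid) analogue of the defect version of Hall's theorem / the matroid union theorem. Concretely, I would argue it by induction on $\sum_{\mc{J}}\theta_{\mc{J}}$: if all $\theta_{\mc{J}}=0$ there is nothing to do; otherwise pick some $\mc{J}^\star$ with $\theta_{\mc{J}^\star}>0$ and show that a generic vector $v\in \bar U_{\mc{J}^\star}$ can be added to a running direct sum $W$ (currently of the right dimension) while preserving all the inequalities for the decremented vector $(\theta_{\mc{J}})$; the point is that the inequalities \eqref{eq:IneqCondTheta} are stable under decrementing one $\theta$ by $1$ as long as we only need $v\notin W$ restricted appropriately, and a counting/genericity argument (valid because $q$ is large, or even exactly by the exchange property) produces such a $v$. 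Pulling the resulting $\bar U'_{\mc{J}}\sqsubseteq\bar U_{\mc{J}}$ back to $U_{\mc{J}}$ along the (injective-on-$U_{\mc{J}}$) projection gives $U'_{\mc{J}}\sqsubseteq U_{\mc{J}}$ with the required dimensions; and directness in $V$ plus orthogonality of each $U_{\mc{J}}$ with $\Pi_\eve$ lifts to $\dim(\Pi_\eve+\sum_i U'_{\mc{J}_i})=\dim(\Pi_\eve)+\sum_i\theta_{\mc{J}_i}$, as desired.

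\textbf{Main obstacle.} The delicate point is the inductive step of the ``if'' direction: after choosing part of $\bar U'_{\mc{J}^\star}$ one must verify that the decremented weight vector still satisfies \emph{all} of \eqref{eq:IneqCondTheta}, so that the induction hypothesis applies. Equivalently, one must confirm that the family $\{\bar U_{\mc{J}}\}$ together with the quantities $\theta_{\mc{J}}$ genuinely defines an independence system in which the "union" bound is achievable — i.e., that the submodular function $\{\mc{J}_1,\ldots,\mc{J}_k\}\mapsto\dim(\sum_i \bar U_{\mc{J}_i})$ together with the upper bounds $\theta_{\mc{J}}$ satisfies the hypotheses of the matroid union / matroid intersection theorem. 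I expect this to reduce to a clean citation of the representable-matroid union theorem (or a short self-contained greedy/exchange argument), with the large-$q$ hypothesis ensuring the genericity choices go through, exactly as in the surrounding lemmas of the paper.
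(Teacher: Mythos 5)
Your proposal is correct and follows essentially the same route as the paper: the paper defers the whole content of this lemma to \cite[Lemma~4]{KhojKesh-ICC11} (and its technical report), and that cited lemma is precisely the subspace form of the matroid-union/Rado statement you prove — that subspaces $\bar U'_{\mc{J}}\sqsubseteq \bar U_{\mc{J}}$ of prescribed dimensions $\theta_{\mc{J}}$ in general position exist iff $\sum_i\theta_{\mc{J}_i}\le\dim(\sum_i\bar U_{\mc{J}_i})$ for every subcollection, applied after quotienting out $\Pi_\eve$ (which is legitimate since each $U_{\mc{J}}$ meets $\Pi_\eve$ trivially by construction of $\setminus_s$). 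Your sketch of the inductive step is the standard and correct one: tight collections not containing $\mc{J}^\star$ are closed under union by submodular uncrossing, so a suitable $v\in\bar U_{\mc{J}^\star}$ outside the unique maximal tight sum always exists, with no need for large $q$ at this point.
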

\begin{proof}
The proof of this lemma is based on \cite[Lemma~4]{KhojKesh-ICC11} and
can be found in \cite{JafFra-TechRep2012-NonCohNetCodSecrecy}.
\end{proof}

Fig.\ref{fig:NonCohSecrecy-AchvSchmSubspaces} depicts pictorially
the relation between subspaces introduced in the above discussions.

\begin{figure}[htb]
\centering
\includegraphics[scale=0.85]{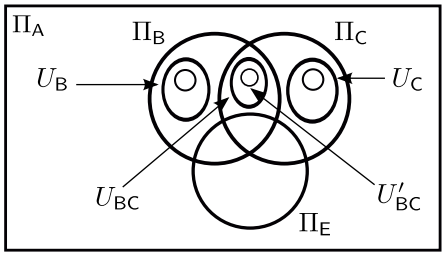}
\caption{The relations between subspaces $\Pi$'s, $U$'s, 
and $U'$'s for the case of $m=2$.}
\label{fig:NonCohSecrecy-AchvSchmSubspaces}
\end{figure}

Although in practice Alice only knows the dimension of $\Pi_\eve$ (w.h.p.),
but still she can find subspaces $U'_{\mc{J}}\sqsubseteq U_{\mc{J}}$
such that the result of Lemma~\ref{lem:ExtractComplementarySubspace-U}
holds w.h.p., as stated in Lemma~\ref{lem:RandomExtractComplementarySubspace-U}.

\begin{lemma}\label{lem:RandomExtractComplementarySubspace-U}
Alice can find subspaces $U'_{\mc{J}}\sqsubseteq U_{\mc{J}}$ such that
$\dim(U'_{\mc{J}})=\theta_{\mc{J}}$ for all $\emptyset\neq\mc{J}\subseteq [1:m]$, 
and $U'_{\mc{J}}$'s are orthogonal subspaces and 
$U'_{\mc{J}}$'s and $\Pi_\eve$ are orthogonal subspaces w.h.p., if and only if
$\theta_{\mc{J}}$'s are non-negative integers and satisfy 
\eqref{eq:IneqCondTheta}.
\end{lemma}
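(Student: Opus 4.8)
The plan is to prove the two implications of the ``if and only if'' separately, noting first that ``the $U'_\mc{J}$'s and $\Pi_\eve$ are orthogonal'' is equivalent to the single identity $\dim\!\big(\Pi_\eve+\textstyle\sum_\mc{J} U'_\mc{J}\big)=\dim\Pi_\eve+\sum_\mc{J}\theta_\mc{J}$ (which then forces the analogous identity for every subcollection), and that all the subspaces involved lie inside $\Pi_\alice\cong\Fbb_q^{n_\alice}$. The ``only if'' direction is deterministic: if Alice can exhibit such orthogonal $U'_\mc{J}\sqsubseteq U_\mc{J}$ of dimension $\theta_\mc{J}$ in even a single realization (which ``w.h.p.'' certainly entails), then each $\theta_\mc{J}=\dim U'_\mc{J}$ is a non-negative integer, and for any subsets $\mc{J}_1,\dots,\mc{J}_k$ we get $\dim\Pi_\eve+\sum_i\theta_{\mc{J}_i}=\dim\!\big(\Pi_\eve+\sum_i U'_{\mc{J}_i}\big)\le\dim\!\big(\Pi_\eve+\sum_i U_{\mc{J}_i}\big)$ because $U'_{\mc{J}_i}\sqsubseteq U_{\mc{J}_i}$; rearranging gives \eqref{eq:IneqCondTheta}.

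For the ``if'' direction, the key observation is that Alice's construction uses no knowledge of $\Pi_\eve$ at all: starting from the candidate subspaces $U_\mc{J}$ she has already built (valid with probability $1-O(q^{-1})$ by the discussion before the lemma), she draws each $U'_\mc{J}$ as the row span of a uniformly random $\theta_\mc{J}\times\dim U_\mc{J}$ matrix $A_\mc{J}$ over $\Fbb_q$, with entries read in a fixed basis of $U_\mc{J}$; with probability $1-O(q^{-1})$ each $A_\mc{J}$ has full rank, so $U'_\mc{J}$ is a uniformly random $\theta_\mc{J}$-dimensional subspace of $U_\mc{J}$. Conditioning on the $U_\mc{J}$'s and on $\Pi_\eve$ and fixing a basis of $\Pi_\alice$, the quantity $\dim\!\big(\Pi_\eve+\sum_\mc{J} U'_\mc{J}\big)$ is the rank of the matrix obtained by stacking a basis of $\Pi_\eve$ on top of the products $A_\mc{J}B_\mc{J}$, where $B_\mc{J}$ is the fixed basis matrix of $U_\mc{J}$ in $\Pi_\alice$-coordinates. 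This stacked matrix has exactly $D:=\dim\Pi_\eve+\sum_\mc{J}\theta_\mc{J}$ rows, and applying \eqref{eq:IneqCondTheta} to the full collection gives $D\le n_\alice$, so rank $D$ is not excluded on size grounds; the claim thus reduces to showing that this rank equals $D$ with probability $1-O(q^{-1})$.

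The step I expect to be the main obstacle is certifying that the stacked matrix is not \emph{identically} rank-deficient over the choices of the $A_\mc{J}$'s, i.e.\ that some fixed $D\times D$ minor of it -- a polynomial in the entries of the $A_\mc{J}$'s of degree at most $\sum_\mc{J}\theta_\mc{J}\le n_\alice$, hence independent of $q$ -- is not the zero polynomial. This is precisely what Lemma~\ref{lem:ExtractComplementarySubspace-U} supplies: under integrality and \eqref{eq:IneqCondTheta} there exist subspaces $\hat U_\mc{J}\sqsubseteq U_\mc{J}$ with $\dim\hat U_\mc{J}=\theta_\mc{J}$ and $\dim\!\big(\Pi_\eve+\sum_\mc{J}\hat U_\mc{J}\big)=D$, so substituting a basis matrix of $\hat U_\mc{J}$ for $A_\mc{J}$ makes the stacked matrix attain rank $D$ and hence makes that minor non-zero at this point. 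A Schwartz--Zippel / generic-position argument, of the kind already used for Lemma~\ref{lem:Uni-and-Joint-k-RandSubSpace} and \cite[Lemma~3]{JaFrDi-IT12}, then gives that for uniformly random $A_\mc{J}$'s the minor is non-zero, and the rank equal to $D$, with probability $1-O(q^{-1})$; a union bound over the finitely many $O(q^{-1})$ failure events (invalid $U_\mc{J}$'s, rank-deficient $A_\mc{J}$'s, vanishing witness minor) preserves the overall probability $1-O(q^{-1})$. I would remark that the more naive approach of adjoining the $U'_\mc{J}$'s one at a time and invoking genericity at each step is awkward here, since the partial span $\Pi_\eve+\sum_{i<t}U'_{\mc{J}_i}$ built from the \emph{small} random subspaces need not meet $U_{\mc{J}_t}$ in the generic dimension, which would require propagating several auxiliary dimension identities through the induction; routing the argument through the single global minor plus the witness from Lemma~\ref{lem:ExtractComplementarySubspace-U} sidesteps that entirely.
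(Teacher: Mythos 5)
Your argument is sound: the ``only if'' direction is the straightforward dimension count using $U'_{\mc{J}}\sqsubseteq U_{\mc{J}}$, and for the ``if'' direction, using Lemma~\ref{lem:ExtractComplementarySubspace-U} as the witness that makes the relevant $D\times D$ minor a nonzero polynomial, and then Schwartz--Zippel for uniformly random choices of $U'_{\mc{J}}$ inside $U_{\mc{J}}$ (with a union bound over the $O(q^{-1})$ failure events), is precisely the genericity argument the construction calls for. The paper itself defers this proof to its technical report, but its surrounding discussion (random subspaces satisfying the required orthogonality w.h.p., with Lemma~\ref{lem:ExtractComplementarySubspace-U} supplying deterministic existence when $\Pi_\eve$ is known) indicates that your route is essentially the intended one.
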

\begin{proof}
For the proof refer to \cite{JafFra-TechRep2012-NonCohNetCodSecrecy}.
\end{proof}

Then, we have the following result.

\begin{theorem}\label{thm:NonCoh-Secrecy-AchvThm}
The secret key sharing rate given by the solution of the following 
convex optimization problem can be asymptotically (in the field size) achieved
\begin{equation*}
\begin{array}{ll}
 \mathrm{maximize} & \left[ \min_{r\in[1:m]} \sum_{\mc{J}\ni r} \theta_{\mc{J}} \right] (\ell - n_\alice)\log{q} \\
\mathrm{subject\ to} & \theta_{\mc{J}}\ge 0, \quad \forall \mc{J}\subseteq [1:m],\ \mc{J}\neq\emptyset, \quad \mathrm{\ and\ }\\
&\theta_{\mc{J}_1}+\cdots+\theta_{\mc{J}_k} \le \\
& \quad\quad \dim\left( U_{\mc{J}_1}+\cdots+U_{\mc{J}_k}+\Pi_\eve \right) - \dim(\Pi_\eve)\\
& \forall k,\ \forall \mc{J}_1,\ldots,\mc{J}_k:\ \emptyset\neq\mc{J}_i\subseteq[1:m],  \\
& \mc{J}_i\neq\mc{J}_j \ \mathrm{if} \ i\neq j, 
\end{array}
\end{equation*}
where for every $\mc{J}$, $U_\mc{J}$ is chosen uniformly at random from $\Pi_\mc{J}$
with the dimension calculated by \eqref{eq:U_J-Dimension} under the assumption
that $\Pi_1,\ldots,\Pi_m$, and $\Pi_\eve$ are selected independently and uniformly at random
from $\Pi_\alice$ with dimensions $n_1,\ldots,n_m, n_\eve$.
\end{theorem}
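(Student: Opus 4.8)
The plan is to establish the theorem in two parts: first a coding scheme that achieves the claimed rate, then a privacy/secrecy argument showing Eve learns essentially nothing about the resulting key. I would begin by fixing a feasible assignment of the parameters $\theta_\mc{J}\ge0$ satisfying the linear constraints in the statement. By Lemma~\ref{lem:RandomExtractComplementarySubspace-U}, since these $\theta_\mc{J}$ are nonnegative integers satisfying \eqref{eq:IneqCondTheta}, Alice can (w.h.p., using only the dimensions she infers via Lemma~\ref{lem:Uni-and-Joint-k-RandSubSpace}) extract subspaces $U'_\mc{J}\sqsubseteq U_\mc{J}$ with $\dim(U'_\mc{J})=\theta_\mc{J}$ that are mutually orthogonal and jointly orthogonal to $\Pi_\eve$. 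The key will be built by packing messages into $W\triangleq\sum_\mc{J} U'_\mc{J}$; each $U'_\mc{J}$ carries $\theta_\mc{J}$ coordinates' worth of symbols intended for exactly the terminals in $\mc{J}$.

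Next I would describe how each legitimate terminal $\ter_r$, $r\in[1:m]$, reconstructs its share. Since $U'_\mc{J}\sqsubseteq U_\mc{J}\sqsubseteq\Pi_\mc{J}\sqsubseteq\Pi_r$ for every $\mc{J}\ni r$, terminal $r$ observes all of $\bigoplus_{\mc{J}\ni r} U'_\mc{J}$ inside its received subspace $\Pi_r$. Because the public channel reveals the transfer matrices $F_i[t]$ and Alice publicly announces (a description of) the chosen $U'_\mc{J}$'s and the basis/coordinate structure relative to her transmitted $X_\alice[t]=[\,I\ \ M[t]\,]$, each terminal can identify which linear combinations of the uniformly random matrix $M[t]$ lie in the subspaces it is supposed to decode. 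The special form \eqref{eq:AchvAliceSpecialFormMsg} guarantees a one-to-one correspondence between a $d$-dimensional subspace of $\Pi_\alice$ and $d(\ell-n_\alice)$ uniform field symbols, which is where the factor $(\ell-n_\alice)\log q$ comes from: terminal $r$ recovers $\big(\sum_{\mc{J}\ni r}\theta_\mc{J}\big)(\ell-n_\alice)$ uniform symbols. Taking the minimum over $r$ and reconciling down to the common information (e.g.\ by a further round of public discussion in which Alice reveals the differences so every terminal ends up with the same minimal-length string) yields a common key of rate $\big[\min_{r}\sum_{\mc{J}\ni r}\theta_\mc{J}\big](\ell-n_\alice)\log q$, matching the objective.

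The secrecy argument is the heart of the proof and, I expect, the main obstacle. I would show that $\Pi_\eve$ together with everything sent on the public channel is (asymptotically in $q$) independent of the key. The public-channel transcript consists of the transfer matrices $F_i[t]$ (which are independent of $M[t]$ by the channel model) and Alice's announcements, which depend only on the \emph{dimensions} of the various intersections, not on $M[t]$ beyond what is needed to name the $U'_\mc{J}$'s — and those subspaces are chosen by Alice using fresh randomness. The crucial point is the orthogonality $\dim\big(\Pi_\eve+\sum_\mc{J} U'_\mc{J}\big)=\dim(\Pi_\eve)+\sum_\mc{J}\theta_\mc{J}$ guaranteed by Lemma~\ref{lem:RandomExtractComplementarySubspace-U}: this means Eve's subspace $\Pi_\eve$ intersects $W=\sum_\mc{J}U'_\mc{J}$ trivially, so conditioned on $\Pi_\eve$ (and on the transfer matrices) the coordinates of $M[t]$ that Alice uses to fill $W$ remain uniformly distributed. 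Hence $I(\Kca;\Pi_\eve,\text{public})=o(\log q)$ per symbol, and the rate claimed is securely achievable. The delicate part is handling the conditioning carefully — Eve's observation $\Pi_\eve$ is itself random and correlated with the $\Pi_i$'s that determine the $U_\mc{J}$'s, so one must argue that, on the high-probability event where all dimensions take their generic values from Lemma~\ref{lem:Uni-and-Joint-k-RandSubSpace}, the residual conditional distribution of the key symbols given Eve's total knowledge is exactly uniform, and that the $O(q^{-1})$ bad event contributes negligibly to the mutual information. I would close by noting that the constraint set is a polymatroid-type system, so the maximization is a linear program (a fortiori convex), justifying the phrase ``convex optimization problem'' in the statement.
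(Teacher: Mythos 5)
Your overall architecture matches the paper's: extract orthogonal subspaces $U'_{\mc{J}}\sqsubseteq U_{\mc{J}}$ of dimension $\theta_{\mc{J}}$ via Lemma~\ref{lem:RandomExtractComplementarySubspace-U}, use the orthogonality to $\Pi_\eve$ to argue that the symbols carried by each $U'_{\mc{J}}$ are secret (this is exactly what the paper delegates to Lemma~\ref{lem:DistilSecurePackets-NonCohSetup}), and then combine the per-group keys into one common key. However, there are two genuine gaps. First, your final ``reconciliation'' step is the weakest point and is essentially waved away: the terminals hold \emph{different, non-nested} collections of key symbols (terminal $r$ knows $\mc{K}_{\mc{J}}$ only for $\mc{J}\ni r$), and ``Alice reveals the differences'' publicly is not obviously secure — revealing literal differences of key material can leak to Eve. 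The paper resolves this by observing that each $\mc{K}_{\mc{J}}$ acts as a one-time pad of capacity $\hat{\theta}_{\mc{J}}(\ell-n_\alice)\log q$ on a virtual edge to every terminal in $\mc{J}$, so the problem becomes a standard multicast to $m$ sinks where sink $r$ has min-cut $\sum_{\mc{J}\ni r}\hat{\theta}_{\mc{J}}$; the main theorem of network coding then delivers a common key at the min-cut rate, and secrecy of the public multicast messages follows because they are encrypted by the $\mc{K}_{\mc{J}}$'s. You need this (or an equivalent explicit construction) rather than an appeal to ``revealing differences.''

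Second, you instantiate the scheme on a single channel use and assume the $\theta_{\mc{J}}$'s are integers, but the optimum of the stated linear program is in general fractional (in the paper's own three-terminal example the optimizer is $\theta_\bob=(\alpha_3-\alpha_2)/2$). The paper handles this by operating over a block of $N$ channel uses, setting $\hat{\theta}_{\mc{J}}=\lfloor N\theta_{\mc{J}}\rfloor$, verifying that the constraints \eqref{eq:IneqCondTheta} imply their $N$-fold direct-sum analogues \eqref{eq:IneqCondTheta-TimeExt}, and letting $N\to\infty$. Without this blocklength argument you only achieve the integer points of the feasible polytope, not the claimed LP value. Your secrecy discussion, by contrast, is sound and if anything more explicit than the paper's (which outsources it to Lemma~\ref{lem:DistilSecurePackets-NonCohSetup}), and your closing remark that the program is a linear program is correct.
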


\begin{proof}[Proof of Theorem~\ref{thm:NonCoh-Secrecy-AchvThm}]
Let Alice use the broadcast channel $N$ times by sending matrices 
$X_\alice[1],\ldots,X_\alice[N]$ of the form \eqref{eq:AchvAliceSpecialFormMsg}. 
As mentioned before, in every time-slot
$t$, each of the legitimate terminals sends publicly the channel transfer 
matrix it has received.

Then, let us define $\hat{\theta}_{\mc{J}} \triangleq \lfloor N\theta_{\mc{J}} \rfloor$
for all $\mc{J}$ and consider the following set of inequalities
\begin{gather}
\hat{\theta}_{\mc{J}_1}+\cdots+\hat{\theta}_{\mc{J}_k} + N\dim(\Pi_\eve) \le \nonumber\\
\dim\left( \bigoplus_{t=1}^N U_{\mc{J}_1}[t] + \cdots + \bigoplus_{t=1}^N U_{\mc{J}_k}[t] 
+ \bigoplus_{t=1}^N \Pi_\eve[t] \right), \label{eq:IneqCondTheta-TimeExt}
\end{gather}
where ``$\oplus$'' is the direct sum operator. Each of 
$\hat{U}_{\mc{J}_i}\triangleq\bigoplus_{t=1}^N U_{\mc{J}_i}[t]$ is a subspace of 
an $N\times n_\alice$ dimensional space $\bigoplus_{t=1}^N \Pi_\alice[t]$.
Similarly, we have $\hat{\Pi}_\eve \sqsubseteq \bigoplus_{t=1}^N \Pi_\alice[t]$
where $\hat{\Pi}_\eve\triangleq\bigoplus_{t=1}^N \Pi_\eve[t]$.
It can be easily seen that if the set of inequalities \eqref{eq:IneqCondTheta} are
satisfied then the set of inequalities \eqref{eq:IneqCondTheta-TimeExt} are 
also satisfied.

Now, by using Lemma~\ref{lem:RandomExtractComplementarySubspace-U}, 
Alice can find a set of orthogonal subspaces $\hat{U}'_{\mc{J}}$ 
with dimension $\hat{\theta}_{\mc{J}}$ (that are also orthogonal 
to $\hat{\Pi}_\eve$ w.h.p.).
By applying Lemma~\ref{lem:DistilSecurePackets-NonCohSetup} (appeared after this theorem),
one would observe that if Alice uses a basis of $\hat{U}'_{\mc{J}}$ 
($\hat{\theta}_\mc{J}$ linear independent vectors from $\hat{U}'_{\mc{J}}$) 
to share a secret key $\mc{K}_\mc{J}$ with all terminals in $\mc{J}$, then
this key is secure from Eve and all other legitimate terminals in 
$\mc{J}^c$ w.h.p.
Using each key $\mc{K}_\mc{J}$, Alice can send a message of size 
$\hat{\theta}_\mc{J} (\ell-n_\alice)\log{q}$ secretly to the 
terminals in $\mc{J}$. In order to share the key $\Kca_\mc{J}$,
Alice sends publicly a set of coefficients for each terminal in $\mc{J}$
so that each of them can construct the subspace $\hat{U}_\mc{J}$ from
their own received subspace. Note that even having these coefficients,
Eve cannot recover any information regarding $\Kca_\mc{J}$ (for more discussion
see \cite{JaFrDiPuAr-Asilomar10}).


Up until now, the problem of sharing a key $\mc{K}$ among legitimate 
terminals have been reduced to a multicast problem where
Alice would like to transmit a message (\ie, the shared key $\mc{K}$) to a set of terminal where
the $r$th one has a min-cut $\sum_{\mc{J}\ni r} \hat{\theta}_{\mc{J}}$.
From the main theorem of network coding (\eg, see 
\cite{AhlCaiLiYeu-IT00-NetCode,LiYeCa-IT03-NetCodeLin,KoMe-TransNet03-AlgApr,FrgSolj-Book07-NetCodeFund}), we know that this
problem can be solved by performing linear network coding where
the achievable rate is as follows
\begin{equation*}
R_s\le \left[ \frac{1}{N} \min_{r\in[1:m]} \sum_{\mc{J}\ni r} \hat{\theta}_{\mc{J}} \right] (\ell - n_\alice)\log{q}.
\end{equation*}
By increasing $N$, the achievable secrecy rate will be arbitrarily
close to
$R_s\le \left[ \min_{r\in[1:m]} \sum_{\mc{J}\ni r} \theta_{\mc{J}} \right] (\ell - n_\alice)\log{q},$
and we are done.
\end{proof}

\begin{lemma}\label{lem:DistilSecurePackets-NonCohSetup}
 Consider a set of $n_\alice$ packets denoted by the rows of a matrix
$X_\alice\in\Fbb_q^{n_\alice\times\ell}$ of the form $X_\alice=[I\quad M]$, 
where $M\sim\uni{\Fbb_q^{n_\alice\times(\ell-n_\alice)}}$.
Assume that Eve has overheard $n_\eve$ independent linear combinations of these packets,
represented by the rows of a matrix $X_\eve\in\Fbb_q^{n_\eve\times\ell}$.
Then for every $k$ packets $y_1,\ldots,y_k$ that are linear combinations
of the rows of $X_\alice$, if the subspace
$\Pi_Y=\rspan{y_1,\ldots,y_k}$ is orthogonal to $\rspan{X_\eve}$
we have
$I(y_1,\dots,y_k;X_\eve)=0$.
\end{lemma}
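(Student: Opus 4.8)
The plan is to show that the joint distribution of $(y_1,\ldots,y_k)$ is unchanged by conditioning on $X_\eve$, from which $I(y_1,\ldots,y_k;X_\eve)=0$ follows immediately. Concretely, write $Y = G X_\alice$ where $G \in \Fbb_q^{k\times n_\alice}$ has rank $k$ (the $y_i$'s are linearly independent since $\dim \Pi_Y = k$) and $X_\eve = E X_\alice$ where $E \in \Fbb_q^{n_\eve\times n_\alice}$ has rank $n_\eve$. Using the special form $X_\alice = [\,I \;\; M\,]$ with $M$ uniform over $\Fbb_q^{n_\alice\times(\ell-n_\alice)}$, the pair $(Y, X_\eve)$ is determined by the fixed matrices $G, E$ together with $(GM, EM)$. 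So it suffices to prove that $GM$ and $EM$ are independent, with $GM$ uniform over $\Fbb_q^{k\times(\ell-n_\alice)}$.

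First I would reduce to a single column: since the columns of $M$ are i.i.d.\ uniform over $\Fbb_q^{n_\alice}$, it is enough to show that for a uniform random vector $v \in \Fbb_q^{n_\alice}$, the pair $(Gv, Ev)$ has $Gv$ uniform on $\Fbb_q^k$ and independent of $Ev$. The key algebraic fact is the orthogonality hypothesis: $\Pi_Y \cap \rspan{X_\eve} = \{\Bs 0\}$ translates into a statement about the row spaces of $G$ and $E$ — namely that the stacked matrix $\begin{bmatrix} G \\ E \end{bmatrix} \in \Fbb_q^{(k+n_\eve)\times n_\alice}$ has full row rank $k + n_\eve$. (Here is where I would be careful: the orthogonality of $\Pi_Y$ and $\rspan{X_\eve}$ as subspaces of $\Fbb_q^\ell$, combined with $X_\alice = [I\;M]$ giving an isomorphism between $\rspan{X_\alice}$ and its coordinate representation, is exactly what forces $\rspan{G}$ and $\rspan{E}$ to intersect trivially in $\Fbb_q^{n_\alice}$, hence the stacked matrix is full rank.) Given that, the linear map $v \mapsto (Gv, Ev)$ from $\Fbb_q^{n_\alice}$ to $\Fbb_q^k \times \Fbb_q^{n_\eve}$ is surjective, so the image of the uniform distribution is uniform on the product, which gives both the marginal uniformity of $Gv$ and its independence from $Ev$.

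Assembling: tensoring back over the $\ell - n_\alice$ columns, $GM$ is uniform on $\Fbb_q^{k\times(\ell-n_\alice)}$ and independent of $EM$, hence of $X_\eve$ (which is a deterministic function of $E$ and $EM$). Therefore $Y$, being a deterministic function of $G$ and $GM$, is independent of $X_\eve$, so $I(y_1,\ldots,y_k;X_\eve)=0$. The only real obstacle is the bookkeeping in the second step — correctly converting the subspace-orthogonality statement in $\Fbb_q^\ell$ into the full-rank statement about the coefficient matrices $G$ and $E$ in $\Fbb_q^{n_\alice}$; once that is pinned down, the rest is the standard fact that a surjective linear map pushes the uniform distribution to the uniform distribution. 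I would also note that this argument does not need $q$ large — it is exact — with the $q$-large / w.h.p.\ caveats entering only through the earlier lemmas that guarantee such an orthogonal $\Pi_Y$ exists.
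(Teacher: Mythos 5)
Your proof is correct and is essentially the argument the paper relies on (the paper defers the details to Appendix~B of its technical report): the subspace orthogonality in $\Fbb_q^\ell$ reduces to the full-row-rank condition on the stacked coefficient matrix $\begin{bmatrix} G \\ E\end{bmatrix}$ via the injectivity of $w\mapsto wX_\alice$ guaranteed by the identity block, and then the uniform distribution on $M$ pushes forward, column by column, through the surjective linear map $v\mapsto (Gv,Ev)$ to give exact independence of $Y$ and $X_\eve$. Your closing observation that the lemma is exact for every $q$, with the large-field caveats confined to the earlier lemmas that guarantee such an orthogonal $\Pi_Y$ exists, is also right.
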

\begin{proof}
The proof is stated in \cite[Appendix~B]{JafFra-TechRep2012-NonCohNetCodSecrecy}.
\end{proof}


\subsection{Special Case: Achievability Scheme for Two Terminals}
For simplicity and without loss of
generality we assume that $n_\bob\le n_\alice$ and $n_\eve\le n_\alice$.
The key generation scheme starts by Alice broadcasting a message 
$X_\alice[t]$ at time $t$ of the form of \eqref{eq:AchvAliceSpecialFormMsg}. 
Then, Theorem~\ref{thm:NonCoh-Secrecy-AchvThm} states that the secrecy 
rate $R_s$ is achievable if
\begin{equation*}
 R_s \le \left[ \dim(U_\bob+\Pi_\eve)-\dim(\Pi_\eve) \right] (\ell - n_\alice)\log{q},
\end{equation*}
where $U_\bob = \Pi_\bob \setminus_s \Pi_\eve$ (for convenience we have 
replaced $U_{\{\bob\}}$ with $U_\bob$). Because $U_\bob\cap \Pi_\eve=\{\Bs{0}\}$,
we have
\begin{align*}
 R_s &\le \left[ \dim(U_\bob) \right] (\ell - n_\alice)\log{q} \nonumber\\
 &= \left[ \dim(\Pi_\bob)-\dim(\Pi_\bob \cap \Pi_\eve) \right] (\ell - n_\alice)\log{q} \nonumber\\
 &= \left[ n_\bob - (n_\bob+n_\eve-n_\alice)^+ \right] (\ell - n_\alice)\log{q} \nonumber\\
 &= \left[ \min[n_\alice, n_\bob+n_\eve] - n_\eve \right] (\ell - n_\alice)\log{q},
\end{align*}
where this is the same as the upper bound given in 
Theorem~\ref{thm:MultTerSec-NonCohNetCode-UpperBound}. This is obvious
when $n_\alice \le n_\bob+n_\eve$. On the other hand, if $n_\alice>n_\bob+n_\eve$,
then Alice can reduce the number of injected packets in every time-slot from 
$n_\alice$ to $n_\bob+n_\eve$ (there is no need to use more
than $n_\bob+n_\eve$ degrees of freedom).

\begin{remark}
Note that in the above scheme, as long as $n_\eve < n_\alice$, the secrecy 
rate is non-zero.

Now, we compare the derived secrecy rate with the case where no feedback is
allowed. First let us assume that $n_\bob\ge n_\eve$. Then, in the non-coherent
network coding scenario introduced in \S\ref{sec:SecShar-NonCohNetCode-ProbStatement},
it can be easily verified that the channel from Alice to Eve is a \emph{stochastically 
degraded} (for the definition refer to \cite[p.~373]{LiaPoorShamai-Now09-InfTheSec}) version of the channel from Alice to Bob.

So by applying the result of \cite{Wyner-Bell75-Wiretap} or \cite[Theorem~3]{CsisKor-IT78-BrodConfdMsg},
for the secret key sharing capacity we can write
\begin{align*}
C_s &= \max_{P_{X_\alice}} \left[ I(X_\alice;X_\bob) - I(X_\alice;X_\eve) \right] \nonumber\\
&{=} \max_{P_{\Pi_\alice}} \left[ I(\Pi_\alice; \Pi_\bob) - I(\Pi_\alice;\Pi_\eve) \right],
\end{align*}
where the sufficiency of optimization over subspaces follows from 
a similar argument to \cite[Theorem~1]{JaMoFrDi-IT11}. Similar to the proof of 
Lemma~\ref{lem:MultTerSec-NonCohNetCode-UpperBound},
one can show that
\begin{align*}
 C_s = [n_\bob-n_\eve]^{+} (\ell-n_\bob)\log{q},
\end{align*}
which is positive only if $n_\bob>n_\eve$.
\hfill$\blacksquare$
\end{remark}

The above comparison demonstrates the amount of improvement of
the secret key generation rate we might gain by using feedback.

\subsection{Special Case: Achievability Scheme for Three Terminals}
As an another example, here we consider the three trusted terminals problem
(\ie, $m=2$). As before, we assume that $n_\alice < \ell$ and for the convenience
we suppose that $n_\bob=n_\calvin\le n_\alice$ and $n_\eve\le n_\alice$.

In order to characterize the achievable secrecy rate, we need to find the
dimension of subspaces $U_\bob$, $U_\calvin$, and $U_{\bob\calvin}$ and 
their sums (including $\Pi_\eve$ as well).
We assume that the field size $q$ is large and
we know that $\Pi_\bob$, $\Pi_\calvin$, and $\Pi_\eve$ are chosen uniformly
at random from $\Pi_\alice$.
Subspaces $\Pi_{\bob\calvin}$ and $\Pi_{\bob\eve}$ are also distributed
independently and uniformly at random in $\Pi_\bob$. Similarly, the same
is true for $\Pi_{\bob\calvin}$ and $\Pi_{\calvin\eve}$ in $\Pi_\calvin$.
We have
\begin{align*}
 \left\{\begin{array}{l}
 U_\bob \triangleq \Pi_\bob \setminus_s (\Pi_{\bob\calvin}+\Pi_{\bob\eve}) \\
 U_\calvin \triangleq \Pi_\calvin \setminus_s (\Pi_{\bob\calvin}+\Pi_{\calvin\eve}) \\
 U_{\bob\calvin} \triangleq \Pi_{\bob\calvin} \setminus_s (\Pi_{\bob\calvin\eve}), 
 \end{array}\right.
\end{align*}
so we can write
\par\nobreak{\small
\setlength{\abovedisplayskip}{-2pt}
\begin{align*}
 \dim(U_\bob) &= \dim(\Pi_\bob) -\dim(\Pi_{\bob\calvin}+\Pi_{\bob\eve}) \nonumber\\
 &\stackrel{\text{(a)}}{=} \dim(\Pi_\bob) - \min\left[ \dim(\Pi_{\bob\calvin})+\dim(\Pi_{\bob\eve}), \dim(\Pi_\bob) \right] \nonumber\\
 &\stackrel{\text{(b)}}{=} n_\bob - \min\left[ \dim(\Pi_{\bob\calvin})+\dim(\Pi_{\bob\eve}), n_\bob \right] \nonumber\\
 &= [ n_\bob - \dim(\Pi_{\bob\calvin}) - \dim(\Pi_{\bob\eve}) ]^+ \nonumber\\
 &\stackrel{\text{(c)}}{=} \left[ n_\bob - (2n_\bob - n_\alice)^+ - (n_\bob + n_\eve - n_\alice)^+ \right]^+,
\end{align*}}%
where (a) follows from Lemma~\ref{lem:Uni-and-Joint-k-RandSubSpace} because 
$\Pi_{\bob\calvin}$ and $\Pi_{\bob\eve}$ are chosen independently and 
uniformly at random from $\Pi_\bob$, (b) is true because $q$ is large, 
and (c) follows from Lemma~\ref{lem:Uni-and-Joint-k-RandSubSpace}. Note that because 
we have assumed $n_\bob=n_\calvin$ it follows that $\dim(U_\calvin)=\dim(U_\bob)$.

Similarly, for the dimension of $U_{\bob\calvin}$ we can write
\begin{align*}
 \dim(U_{\bob\calvin}) &= \dim(\Pi_{\bob\calvin}) - \dim(\Pi_{\bob\calvin\eve}) \nonumber\\
 &= \dim(\Pi_{\bob\calvin}) - \left[ \dim(\Pi_{\bob\calvin}) +n_\eve -n_\alice \right]^+ \nonumber\\
 &=  \min \left[ n_\alice- n_\eve, (2n_\bob - n_\alice)^+ \right].
\end{align*}

\begin{proposition}\label{prop:UbUcUbc-Complmtry}
 From the construction, the subspaces $U_\bob$, $U_\calvin$, 
 and $U_{\bob\calvin}$ are orthogonal and similarly the same holds for $U_\bob$, 
 $U_{\bob\calvin}$, and $\Pi_\eve$. Also $U_\calvin$, $U_{\bob\calvin}$,
 and $\Pi_\eve$ are orthogonal w.h.p.
\end{proposition}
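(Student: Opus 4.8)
The plan is to verify the three orthogonality claims directly from the definitions of $U_\bob$, $U_\calvin$, $U_{\bob\calvin}$ via the subtraction operator $\setminus_s$, combined with a dimension count that relies on Lemma~\ref{lem:Uni-and-Joint-k-RandSubSpace} and the assumption that $q$ is large. Recall $U_\bob$ is complementary to $\Pi_{\bob\calvin}+\Pi_{\bob\eve}$ inside $\Pi_\bob$, $U_\calvin$ is complementary to $\Pi_{\bob\calvin}+\Pi_{\calvin\eve}$ inside $\Pi_\calvin$, and $U_{\bob\calvin}$ is complementary to $\Pi_{\bob\calvin\eve}$ inside $\Pi_{\bob\calvin}$. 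The key structural observations are: $U_\bob\sqsubseteq\Pi_\bob$ and $U_\bob\cap(\Pi_{\bob\calvin}+\Pi_{\bob\eve})=\{\Bs 0\}$, so in particular $U_\bob\cap\Pi_{\bob\calvin}=\{\Bs 0\}$ and $U_\bob\cap\Pi_\eve\sqsubseteq U_\bob\cap\Pi_{\bob\eve}=\{\Bs 0\}$ (since $U_\bob\sqsubseteq\Pi_\bob$ forces any vector of $U_\bob$ lying in $\Pi_\eve$ to lie in $\Pi_\bob\cap\Pi_\eve=\Pi_{\bob\eve}$); symmetric statements hold for $U_\calvin$; and $U_{\bob\calvin}\sqsubseteq\Pi_{\bob\calvin}$ with $U_{\bob\calvin}\cap\Pi_{\bob\calvin\eve}=\{\Bs 0\}$, hence $U_{\bob\calvin}\cap\Pi_\eve\sqsubseteq\Pi_{\bob\calvin}\cap\Pi_\eve=\Pi_{\bob\calvin\eve}$, giving $U_{\bob\calvin}\cap\Pi_\eve=\{\Bs 0\}$, and likewise $U_\bob\cap U_{\bob\calvin}\sqsubseteq\Pi_\bob\cap U_{\bob\calvin}\sqsubseteq\Pi_{\bob\calvin}\cap U_\bob=\{\Bs 0\}$. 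So pairwise orthogonality of all relevant pairs is immediate from the construction; what remains is joint orthogonality, i.e. that the relevant sums are \emph{direct}.

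For the first claim, I would show $\dim(U_\bob+U_\calvin+U_{\bob\calvin})=\dim(U_\bob)+\dim(U_\calvin)+\dim(U_{\bob\calvin})$. Here I use that $U_\bob,U_\calvin$, being uniformly random subspaces of $\Pi_\bob,\Pi_\calvin$ respectively of the dimensions computed above, are in ``general position'' within $\Pi_\alice$ w.h.p. by Lemma~\ref{lem:Uni-and-Joint-k-RandSubSpace}, so their intersection has the generic dimension $[\dim U_\bob+\dim U_\calvin-n_\alice]^+$; one then checks that with the explicit formulas $\dim(U_\bob)=\dim(U_\calvin)=[n_\bob-(2n_\bob-n_\alice)^+-(n_\bob+n_\eve-n_\alice)^+]^+$ this generic intersection is in fact $\{\Bs 0\}$, because $\dim(U_\bob)+\dim(U_\calvin)\le n_\alice$ (the two quantities $(2n_\bob-n_\alice)^+$ and $(n_\bob+n_\eve-n_\alice)^+$ subtract off enough). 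Then adding $U_{\bob\calvin}\sqsubseteq\Pi_{\bob\calvin}$: since $U_\bob$ and $U_\calvin$ were already arranged to be complementary to $\Pi_{\bob\calvin}$ inside $\Pi_\bob$ and $\Pi_\calvin$, and $U_{\bob\calvin}\sqsubseteq\Pi_{\bob\calvin}$, a dimension count shows $(U_\bob+U_\calvin)\cap U_{\bob\calvin}=\{\Bs 0\}$ w.h.p. For the second and third claims, the argument is parallel with $\Pi_\eve$ playing the role of the third subspace: I would verify $\dim(U_\bob+U_{\bob\calvin}+\Pi_\eve)=\dim(U_\bob)+\dim(U_{\bob\calvin})+\dim(\Pi_\eve)$ using that $U_\bob$ is complementary to $\Pi_{\bob\eve}$ inside $\Pi_\bob$ and that $U_{\bob\calvin}$ is complementary to $\Pi_{\bob\calvin\eve}$ inside $\Pi_{\bob\calvin}$, so that no nonzero vector in $U_\bob+U_{\bob\calvin}$ can lie in $\Pi_\eve$; the decomposition into $U_\bob\oplus U_{\bob\calvin}$ and then the sum with $\Pi_\eve$ follows by adding up the dimension identities, once I confirm the total dimension $\dim(U_\bob)+\dim(U_{\bob\calvin})+n_\eve$ does not exceed $n_\alice$, which again follows from the explicit formulas. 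The case of $U_\calvin,U_{\bob\calvin},\Pi_\eve$ is symmetric under swapping $\bob\leftrightarrow\calvin$, which is legitimate since we assumed $n_\bob=n_\calvin$, and this is the one that only holds w.h.p. (the others being exact consequences of the construction).

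The main obstacle is making the ``w.h.p.'' bookkeeping clean: each $U_\mc{J}$ is not canonically defined but is \emph{chosen} uniformly at random from $\Pi_\mc{J}$ of the prescribed dimension, so the orthogonality assertions are really statements about independent uniformly random subspaces, and one must be careful that Lemma~\ref{lem:Uni-and-Joint-k-RandSubSpace} is applicable — i.e. that conditioning on the already-constructed subspaces still leaves the next one uniformly random in the relevant ambient space, so that the generic-intersection formula applies at each stage. The arithmetic verifying that the various dimension sums stay below $n_\alice$ is routine given the closed forms for $\dim(U_\bob)$ and $\dim(U_{\bob\calvin})$ already derived in the text, so I would treat it as a short case check on the signs of $(2n_\bob-n_\alice)$ and $(n_\bob+n_\eve-n_\alice)$ rather than grinding through it in detail. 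I would also remark that some of these orthogonality relations are not merely generic but forced: e.g. $U_\bob\perp U_{\bob\calvin}$ and $U_\bob\perp\Pi_\eve$ hold deterministically by the definition of $\setminus_s$, and only the statements mixing $U_\bob$ (or $U_\calvin$) with $U_\calvin$ (resp.\ $U_\bob$) and the one involving the full triple with $\Pi_\eve$ genuinely need the randomness, which is exactly what the phrasing ``w.h.p.'' in the proposition flags.
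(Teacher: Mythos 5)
The paper offers no argument for this proposition beyond the words ``from the construction,'' so your proposal has to stand on its own. Your pairwise statements are correct and correctly derived: $U_\bob\cap\Pi_{\bob\calvin}=\{\Bs 0\}$, $U_\bob\cap\Pi_\eve\sqsubseteq\Pi_{\bob\eve}$ hence $=\{\Bs 0\}$, $U_{\bob\calvin}\cap\Pi_\eve\sqsubseteq\Pi_{\bob\calvin\eve}$ hence $=\{\Bs 0\}$, and so on. The gap is in the joint (directness) part, which is where the content of the proposition lies. You reduce it to a genericity computation via Lemma~\ref{lem:Uni-and-Joint-k-RandSubSpace}, writing the generic dimension of $U_\bob\cap U_\calvin$ as $[\dim U_\bob+\dim U_\calvin-n_\alice]^+$; but that lemma applies to subspaces drawn uniformly and independently from a common ambient space, whereas $U_\bob$ and $U_\calvin$ are uniform in the distinct (and intersecting) subspaces $\Pi_\bob$ and $\Pi_\calvin$. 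You name this as ``the main obstacle'' and then leave it open, and the remaining directness claims ($(U_\bob+U_\calvin)\cap U_{\bob\calvin}=\{\Bs 0\}$, ``no nonzero vector of $U_\bob+U_{\bob\calvin}$ lies in $\Pi_\eve$'') are asserted rather than proved. So as written the argument is incomplete exactly at the step the proposition is needed for.

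The missing observation is that no genericity is needed once the defining property \eqref{eq:Def_U_J} holds: every claim in the proposition then follows deterministically by a short decomposition argument. For instance, $U_\bob\cap U_\calvin\sqsubseteq\Pi_\bob\cap\Pi_\calvin=\Pi_{\bob\calvin}$ and $U_\bob\cap\Pi_{\bob\calvin}=\{\Bs 0\}$, so this pair needs no randomness (contrary to your closing remark). For $U_{\bob\calvin}\cap(U_\bob+U_\calvin)$: if $v=u_\bob+u_\calvin$ with $v\in\Pi_{\bob\calvin}\sqsubseteq\Pi_\bob$ and $u_\bob\in\Pi_\bob$, then $u_\calvin\in\Pi_\calvin\cap\Pi_\bob=\Pi_{\bob\calvin}$, forcing $u_\calvin=\Bs 0$ and then $v\in U_{\bob\calvin}\cap U_\bob=\{\Bs 0\}$. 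For $\Pi_\eve\cap(U_\bob+U_{\bob\calvin})$: such a $v$ lies in $\Pi_{\bob\eve}$, and writing $v=u_\bob+u_{\bob\calvin}$ gives $u_\bob\in\Pi_{\bob\calvin}+\Pi_{\bob\eve}$, hence $u_\bob=\Bs 0$ and $v\in U_{\bob\calvin}\cap\Pi_\eve=\{\Bs 0\}$; the other mixed intersections are handled identically. The only probabilistic ingredient is that the uniformly chosen $U_{\mc{J}}$'s satisfy \eqref{eq:Def_U_J} at all, which is the w.h.p.\ event already invoked in the text. Finally, your last paragraph is internally inconsistent: you declare the $U_\calvin,U_{\bob\calvin},\Pi_\eve$ case ``symmetric under swapping $\bob\leftrightarrow\calvin$'' to a case you call exact, yet also the only one that merely holds w.h.p., and your final sentence attaches the w.h.p.\ qualifier to a different set of statements than the proposition does; this should be reconciled rather than left as a remark.
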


Now we may write the linear program stated in Theorem~\ref{thm:NonCoh-Secrecy-AchvThm} 
as follows
\par\nobreak{\small
\setlength{\abovedisplayskip}{-2pt}
\begin{equation*}
\begin{array}{ll}
 \mathrm{maximize} & \min \left[ \theta_{\bob}+\theta_{\bob\calvin}, \theta_{\calvin}+\theta_{\bob\calvin} \right] (\ell - n_\alice)\log{q} \\
\mathrm{subject\ to} & \theta_\bob \le \dim(U_\bob + \Pi_\eve) - n_\eve \\
& \theta_\calvin \le \dim(U_\calvin + \Pi_\eve) - n_\eve \\
& \theta_{\bob\calvin} \le \dim(U_{\bob\calvin} + \Pi_\eve) - n_\eve \\
& \theta_\bob + \theta_\calvin \le \dim(U_\bob + U_\calvin + \Pi_\eve) - n_\eve \\
& \theta_\bob + \theta_\calvin + \theta_{\bob\calvin} \le \dim(U_\bob + U_\calvin + U_{\bob\calvin} + \Pi_\eve) - n_\eve.
\end{array}
\end{equation*}}%
Because of the symmetry in the problem ($n_\bob=n_\calvin$), for the optimal
solution we should have $\theta_\bob=\theta_\calvin$. Knowing this 
and using Proposition~\ref{prop:UbUcUbc-Complmtry}, we may further
simplify the above linear program as follows
\par\nobreak{\small
\setlength{\abovedisplayskip}{-2pt}
\begin{equation*}
\begin{array}{ll}
 \mathrm{maximize} & \left[ \theta_{\bob}+\theta_{\bob\calvin} \right] (\ell - n_\alice)\log{q} \\
\mathrm{subject\ to} & 
 \theta_\bob \le \frac{1}{2} \left[ \dim(U_\bob + U_\calvin + \Pi_\eve) - n_\eve \right] \triangleq \alpha_1 \\
& \theta_{\bob\calvin} \le \dim(U_{\bob\calvin}) \triangleq \alpha_2 \\
& 2\theta_\bob + \theta_{\bob\calvin} \le \dim(U_\bob + U_\calvin + U_{\bob\calvin} + \Pi_\eve) - n_\eve \triangleq \alpha_3.
\end{array}
\end{equation*}}%
From the definitions of $\alpha$'s, we can easily observe that, $\alpha_3\ge 2\alpha_1$, 
$\alpha_3\ge \alpha_2$, and $\alpha_3\le 2\alpha_1+\alpha_2$. Hence, 
$\theta_\bob+\theta_{\bob\calvin}$ gets its maximum at the point 
$(\theta_\bob,\theta_{\bob\calvin})=(\frac{\alpha_3-\alpha_2}{2},\alpha_2)$.
Thus, for the maximum achievable secrecy rate we have
\begin{equation*}
 R_s = \left[ \frac{\alpha_2+\alpha_3}{2} \right] (\ell - n_\alice)\log{q}.
\end{equation*}

As mentioned before, we assume that subspaces $U_\mc{J}$'s are chosen uniformly at random
from $\Pi_\mc{J}$. So $\Pi_\eve$ and $U_\mc{J}$'s are independent and for $\alpha_3$
we can write
\par\nobreak{\small
\setlength{\abovedisplayskip}{-2pt}
\begin{align*}
 \alpha_3 &= \min[\dim(U_\bob)+\dim(U_\calvin)+\dim(U_{\bob\calvin}) + \dim(\Pi_\eve),n_\alice ] - n_\eve \nonumber\\
 &= \min[\dim(U_\bob)+\dim(U_\calvin)+\dim(U_{\bob\calvin}),n_\alice - n_\eve ] \nonumber\\
 &= \min[ 2\dim(U_\bob) + \dim(U_{\bob\calvin}) ,n_\alice - n_\eve ].
\end{align*}}%
So for the secrecy rate (achievable asymptotically when $q$ goes 
to infinity) we have
\par\nobreak{\small
\setlength{\abovedisplayskip}{-2pt}
\begin{gather}
 R_s / (\ell - n_\alice)\log{q}  = \nonumber\\
\min\left[ \dim(U_\bob) + \dim(U_{\bob\calvin}), \frac{1}{2} \left( n_\alice + \dim(U_{\bob\calvin}) - n_\eve \right)  \right]. \label{eq:FinalAchvSecRate-NonCohSecrecy-3Ter}
\end{gather}}%

\begin{example}
As an example, here we compare the achievable secret key sharing rate
among three legitimate terminals (\ie, $m=2$) as derived in
\eqref{eq:FinalAchvSecRate-NonCohSecrecy-3Ter} with the upper bound
stated in Theorem~\ref{thm:MultTerSec-NonCohNetCode-UpperBound}.
We consider two symmetric setup where for the first one we have 
$n_\alice=60$, $n_\bob=n_\calvin=15$ (see Fig.~\ref{fig:Example-NonCohSecrecy-RateReg-na60-nb15})
and for the second one we have $n_\alice=60$, $n_\bob=n_\calvin=45$ 
(see Fig.~\ref{fig:Example-NonCohSecrecy-RateReg-na60-nb45}).
In each of these situations, we depict the upper and lower bounds
on the secret key generation rate as a function of the number of
packets (degrees of freedom) received by Eve.

\begin{figure}[h]
\centering
\subfigure[$m=2$, $n_\alice=60$, and $n_\bob=n_\calvin=15$.]{
\includegraphics[scale=0.47]{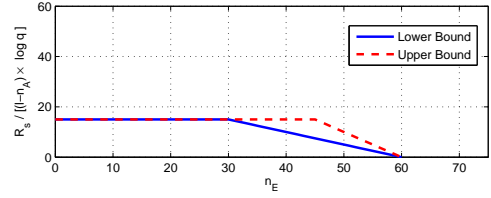}
\label{fig:Example-NonCohSecrecy-RateReg-na60-nb15}
}
\subfigure[$m=2$, $n_\alice=60$, and $n_\bob=n_\calvin=45$.]{
\includegraphics[scale=0.47]{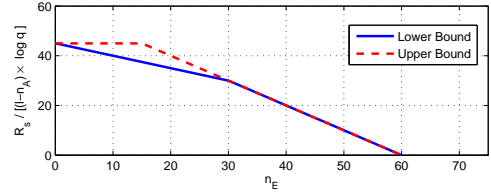}
\label{fig:Example-NonCohSecrecy-RateReg-na60-nb45}
}
\caption{A comparison between the achievable secrecy rate of 
Theorem~\ref{thm:NonCoh-Secrecy-AchvThm} and the upper bound 
given by Theorem~\ref{thm:MultTerSec-NonCohNetCode-UpperBound}
for two cases: (a) when $m=2$, $n_\alice=60$, and $n_\bob=n_\calvin=15$ and
(b) when $m=2$, $n_\alice=60$, and $n_\bob=n_\calvin=45$.}
\end{figure}

\end{example}


\bibliographystyle{IEEEtran}
\bibliography{MyBibliography}

\end{document}